\documentclass[12pt]{article}
\usepackage[round]{natbib}
\usepackage{amsmath,amsthm,verbatim,amssymb}
\usepackage{bm, times,  mathrsfs, dsfont, graphics, multirow, booktabs, url}
\usepackage{epsfig,graphicx}
\usepackage{color}
\usepackage{algorithm, algorithmic}
\usepackage{appendix}
\usepackage{hyperref}
\hypersetup{hypertex=true,
            colorlinks=true,
            linkcolor=blue,
            anchorcolor=blue,
            citecolor=blue}
\usepackage{enumerate}

% Settings from Prof. Feng's templete
\usepackage[small,compact]{titlesec}

% Counter settings from Prof. Feng's templete
\setcounter{totalnumber}{50}
\setcounter{topnumber}{50}
\setcounter{bottomnumber}{50}

\abovecaptionskip 4pt
\belowcaptionskip 0pt
\floatsep 4pt    %separation between floats on a page with text
\intextsep 4pt   %separation between floats and the text

%%%%%%%%%%%%%%%%%%%%%%%%%%%%%%
\oddsidemargin 0in
\evensidemargin 0in
\textwidth      6.2in
\headheight     0in
\topmargin      0in
\textheight 9in
%%%%%%%%%%%%%%%%%%%%%%%%%%%%%%

\renewcommand\baselinestretch{1.75}

% Package settings from Prof. Feng's templete.
% But I use my original package loadings.
%\usepackage{mathrsfs}
% \usepackage{amsmath}
% \usepackage{amssymb}
% \usepackage{bm}
% \usepackage{times}
% \usepackage{enumerate}
% \usepackage{graphicx}
% \usepackage{color}
% \usepackage{graphics}
% \usepackage{dsfont}
% \usepackage{multirow}
% \usepackage{booktabs}
% \usepackage{url}
% \usepackage{cite}
% \usepackage{natbib}
% \usepackage{ctable}
% \usepackage{espfig}

\newtheorem{theorem}{Theorem}

\newtheorem{lemma}{Lemma}

\newtheorem{remark}{Remark}

%\def\spnewtheorem{\@ifstar{\@sthm}{\@Sthm}}
%\spnewtheorem*{proof}{Proof}{\itshape}{\rmfamily}
%\newenvironment{proof}[1][Proof:]{\begin{trivlist}
%\item[\hskip \labelsep {\bfseries #1}]}{\end{trivlist}}
%\newtheorem{proof}{Proof}

\renewcommand{\hat}{\widehat}

\newcommand{\bX}{{\bf X}}
\newcommand{\bx}{\mbox{\bf x}}

\newcommand{\bh}{\mbox{\bf h}}

\newcommand{\bb}{\Delta}

\newcommand{\bd}{\mbox{\bf d}}

\newcommand{\bY}{\mbox{\bf Y}}

\newcommand{\bbeta}{\boldsymbol{\beta}}

\newcommand{\boldeta}{\boldsymbol{\eta}}

\newcommand{\bepsilon}{\boldsymbol{\epsilon}}

\title{$\ell_0$-Regularized  High-dimensional Accelerated Failure Time 
Model}
\author{Xingdong Feng, Jian Huang, Yuling Jiao, and Shuang Zhang \footnote{Xingdong Feng is
Professor and Shuang Zhang is PhD candidate, School of Statistics and Management, Shanghai University of Finance and Economics, Shanghai  200433, China (email: feng.xingdong@mail.shufe.edu.cn). Jian Huang is Professor, Department of Statistics and Actuarial Science, University of Iowa, Iowa City, Iowa 52246, USA (email: jian-huang@uiowa.edu). Yuling Jiao is Associate Professor, School of Mathematics and Statistics, Wuhan University, Wuhan 430000, China (email: yulingjiaomath@whu.edu.cn).  This work  was supported by  NSF grant DMS-1916199,  National Natural Science Foundation of China (No. 11971292, No. 11690012 and No.11871474), and Program for Innovative Research Team of SUFE.}\\
}
\date{\today}

\begin{document}
\maketitle

% abstract
% \input{./tex/RSAVS_abstract}
\begin{abstract}
\renewcommand\baselinestretch{1.4}
{
We develop a constructive approach for $\ell_0$-penalized estimation in the sparse accelerated failure time (AFT)
model  with high-dimensional covariates.
Our proposed method is based on Stute's weighted least squares criterion combined with $\ell_0$-penalization.
This method is a computational algorithm that generates a sequence of solutions iteratively, based on  active sets derived from primal and dual information and root finding according to the KKT conditions.
We refer to the proposed method as AFT-SDAR (for support detection and root finding).
An important aspect of our theoretical results is that we directly concern the sequence of solutions generated based on the AFT-SDAR algorithm. We prove that the  estimation errors of the solution sequence decay exponentially to the optimal error bound
with high probability, as long as the covariate matrix satisfies a mild regularity condition which is necessary and sufficient for model identification  even in the setting of  high-dimensional linear regression.
We also proposed an adaptive version of AFT-SDAR, or AFT-ASDAR, which determines the support size of the estimated
coefficient in a data-driven fashion.
We conduct simulation studies to demonstrate the superior performance of the proposed method over the lasso and MCP in terms of accuracy and speed.
We also apply  the proposed method to a real data set to illustrate its application.

\bigskip

\noindent {\bf Key Words:}\ Censored data; $\ell_0$-penalization; KKT condition; primal and dual information; support detection}

\end{abstract}

\section{Introduction}
\label{intro}
In survival analysis, an attractive alternative to the widely used  proportional hazards model
\citep{cox1972regression}
%(Cox \citeyear{cox1972regression})
%\citep{cox1972regression}
 is the  accelerated failure time (AFT) model
%(Koul et al. \citeyear{koul1981regression}; Wei \citeyear{wei1992accelerated};  Kalbfleisch and Prentice \citeyear{kalbfleisch2011statistical}).
\citep{koul1981regression,wei1992accelerated, kalbfleisch2011statistical}.
The AFT model is a linear regression model in which  the response variable is usually the logarithm or a known monotone transformation of the failure time.
Let $T_{i}$ be the failure time and $\bx_i$ be a $p$-dimensional covariate vector for the $i$th subject in a random sample of size $n$. The AFT model assumes
\begin{equation*}
\ln(T_i)=\bx^{T}_{i}\bbeta^*+\epsilon_i,~~i=1,\ldots,n,
\end{equation*}
where $\bbeta^*\in\mathbb{R}^p$ is the underlying regression coefficient vector,  $\epsilon_i$'s are random error terms.
When $T_i$ is subject to right censoring, we only observe $(Y_i, \delta_i, \bx_i)$, where $Y_i=\min\{\ln(T_i), \ln(C_i)\}$, $C_i$ is the censoring time, and $\delta_i=1_{\{T_i\leq C_i\}}$ is the censoring indicator.
Assume that a random sample of i.i.d. observations ($Y_i$, $\delta_i$, $\bx_i$), $i=1,\ldots,n$,  is
available.
To estimate  $\bbeta^*$ when the distribution of the error terms is unspecified, several approaches have been proposed in the literature.
One approach is the Buckley-James estimator
%(Buckley and James \citeyear{buckley1979linear}),
\citep{buckley1979linear},
which adjusts for censored observations using the Kaplan-Meier
estimator. The second approach is the rank-based estimator
%(Ying \citeyear{ying1993large}),
\citep{ying1993large},
which is motivated by the score function of the partial likelihood.
Another interesting alternative is the weighted least squares approach
%(Stute and Wang \citeyear{stute1993strong}; Stute \citeyear{stute1996distributional}),
\citep{stute1993strong,stute1996distributional},
which involves the minimization of a weighted least squares objective function.

In this paper, we focus on the high-dimensional  AFT model, where the dimension of the covariate vector can exceed the sample size.
In the high-dimensional AFT model,
many researchers have proposed methods for parameter estimation and variable selection.
For example,
%Huang et al. (\citeyear{huang2006regularized})
\cite{huang2006regularized}
considered the LASSO
%(Tibshirani \citeyear{tibshirani1996regression})
\citep{tibshirani1996regression}
 in the AFT model, based on the  weighted least squares criterion;
%Johnson et al. (\citeyear{johnson2008variable}, \citeyear{johnson2008penalized})
\cite{johnson2008variable} and \cite{johnson2008penalized}
applied the SCAD
%(Fan and Li \citeyear{fan2001variable})
\citep{fan2001variable}
penalty to  the rank-based estimator and Buckley-James estimator;
%Cai et al. (\citeyear{cai2009regularized})
\cite{cai2009regularized}
proposed the rank-based adaptive LASSO
%(Zou \citeyear{zou2006adaptive})
\citep{zou2006adaptive}
 method;
 %Huang and Ma  (\citeyear{huang2010variable})
\cite{huang2010variable}
used the bridge penalization for the regularized estimation and variable selection;
%Hu and Chai (\citeyear{hu2013adjusted})
\cite{hu2013adjusted}
 extended the MCP
%(Zhang \citeyear{zhang2010nearly})
\citep{zhang2010nearly}
penalty to the weighted least square estimation;
{
%Khan and Shaw (\citeyear{khan2016variable})
\cite{khan2016variable}
used  the adaptive and  weighted elastic net methods
%(Zou and Zhang \citeyear{zou2009adaptive},
%Hong and Zhang \citeyear{hong2010weighted})
\citep{zou2009adaptive,hong2010weighted}
based on the weighted least squares criterion.}

We propose an $\ell_0$-penalized method for estimation and variable selection in the high-dimensional AFT model.  We extend the support detection and root finding (SDAR) algorithm
%(Huang et al. \citeyear{huang2018constructive})
 \citep{huang2018constructive}
for linear regression model to the
AFT model. For convenience, we refer to the proposed method as AFT-SDAR.
In the same spirit as the SDAR method, AFT-SDAR  is  a constructive approach to estimating the sparse and high-dimensional AFT model. This approach is a computational algorithm motivated from the KKT conditions for the $\ell_0$-penalized weighted least squares solution, and generates a sequence of solutions iteratively, based on support detection using primal and dual information and root finding.
Theoretically, we show  that the $\ell_{\infty}$-norm of  the estimation errors of the solution sequence decay exponentially to the optima order $\mathcal{O}(\sqrt{\frac{\log p}{n}})$ with high probability,  as long as the covariate matrix satisfies the weakest  regularity condition that  is necessary and sufficient for  model identification. Moreover, the estimated  support coincides with the true support  of the underlying vector regression coefficients  if the minimum absolute value of the nonzero entries of the target  is above the  detectable order.
%%%%

The rest of this paper is organized as follows. In Section \ref{AFT-L0},  we described the $\ell_0$-penalized criterion for the AFT model.
In Section \ref{sec:1},  we give the KKT conditions for the $\ell_0$-penalized weighted least squares solutions and describe the proposed AFT-SDAR algorithm.
In Section \ref{sec:2},  we first establish the finite-step and deterministic
error bounds for
the solution sequence generated by the AFT-SDAR algorithm.
As a consequence of these deterministic error bounds, we provide nonasymptotic error bounds for the solution sequence.
We also show that the proposed method recovers the support of the underlying regression coefficient vector in finite iterations with high probability.
In Section \ref{sec:5},  we describe
 AFT-ASDAR, the adaptive version of AFT-SDAR that selects the tuning parameter in a data driven fashion. In Section \ref{sec:6},  we assess the finite sample performance of the proposed method with different simulation studies and a real case study on a breast cancer gene expression data set.
 Concluding remarks are given in Section \ref{sec:15}. Proofs for all the lemmas and theorems are deferred to Appendix. {An R package implementing the proposed method is available at \url{https://github.com/Shuang-Zhang/ASDAR/}.}

\section{AFT regression with $\ell_0$-penalization}
\label{AFT-L0}

Let $Y_{(1)},\ldots,Y_{(n)}$ be the order statistics of $Y_i$'s. Let $\delta_{(1)},\ldots,\delta_{(n)}$  be the associated censoring indicators and let $\bx_{(1)},\ldots,\bx_{(n)}$ be the associated covariates.
In the weighted least squares method,
the weights $w_{(i)}$'s  are the jumps in Kaplan-Meier estimator based on
$(Y_{(i)}, \delta_{(i)})$, $i=1, \ldots, n$, which can be expressed as
\begin{equation}\label{wq}
\begin{split}
&w_{(1)}=\frac{\delta_{(1)}}{n},\\
&w_{(i)}=\frac{\delta_{(i)}}{n-i+1}\cdot\prod_{j=1}^{i-1}{\left(\frac{n-j}{n-j+1}\right)^{\delta_{(j)}}},i=2,\ldots,n.
\end{split}
\end{equation}
The weighted least squares criterion is given by
\begin{equation*}
{\mathcal L}_1(\bbeta)=
\frac{1}{2n}\sum_{i=1}^{n}
w_{(i)}\big{(}Y_{(i)}-\bx^{T}_{(i)}\bbeta\big{)}^2.
\end{equation*}
In the low-dimensional settings with $n \gg p$, this criterion leads to a consistent and asymptotically normal estimator under appropriate conditions %(Stute and Wang \citeyear{stute1993strong}; Stute \citeyear{stute1996distributional}).
\citep{stute1993strong,stute1996distributional}.
However, in the high-dimensional settings when $p \gg n$, regularization is needed to ensure a unique solution in
minimizing $\mathcal{L}_1(\bbeta)$.

We consider the $\ell_0$-regularized method for variable selection and estimation in AFT based on the weighted least squares criterion.
The $\ell_0$-penalized estimator is given by
\begin{equation}\label{eq03}
\bbeta^{\diamond}=
\underset{\bbeta\in \mathbb{R}^p}{\mbox{min}}~
{\mathcal L}_1(\bbeta)+
%\frac{1}{2n}
%\sum_{i=1}^{n}w_{(i)}\left(Y_{(i)}-\bx^{T}_{(i)}\bbeta\right)^2+
\lambda\|\bbeta\|_0,
\end{equation}
where $\lambda \ge 0$ is a tuning parameter, and  $\|\bbeta\|_0$ denotes the number of nonzero elements of $\bbeta$.

To facilitate computation, we rewrite the weighted least squares loss as a standard least squares loss as follows. Let the design matrix be $\bX=\left(\bx_{(1)},\ldots,\bx_{(n)}\right)^T$ and let $\bY=\left(Y_{(1)},\ldots,Y_{(n)}\right)^T$. Define
\begin{equation*}
\begin{split}
&\tilde{\bX}=\mbox{diag}\left(\sqrt{w_{(1)}},\ldots,\sqrt{w_{(n)}}\right)\cdot\bX,\\
&\bar{\bY}=\mbox{diag}\left(\sqrt{w_{(1)}},\ldots,\sqrt{w_{(n)}}\right)\cdot \bY.
\end{split}
\end{equation*}
Without loss of generality, assume that $\|\tilde{\bx}_{j}\|_2>0$, $j=1,\ldots,p$, hold throughout this paper, where $\tilde{\bx}_{j}$ is the $j$th column of $\tilde{\bX}$. Let
 \begin{equation*}
D=\mbox{diag}\Big{(}\frac{\sqrt{n}}
{\|\tilde{\bx}_{1}\|_2},\ldots,\frac{\sqrt{n}}{\|\tilde{\bx}_{p}\|_2}\Big{)}.
\end{equation*}
Define
$\boldeta=D^{-1}\bbeta$ and $\bar{\bX}=\tilde{\bX} D$.
Then  each column of $\bar{\bX}$ is $\sqrt{n}$-length and supp($\boldeta$)=supp($\bbeta$), where supp($\bbeta$)=$\{j: \beta_j\neq 0, j=1,\ldots,p\}$.
Let
\[
\mathcal{L}_2(\boldeta)=\frac{1}{2n}\left\|\bar{\bY}-\bar{\bX}\boldeta\right\|_2^2.
\]
Define
\begin{equation}\label{eq1}
\boldeta^{\diamond}=\underset{\boldeta\in \mathbb{R}^p}{\mbox{min}}~
\mathcal{L}_2(\boldeta)
+\lambda\|\boldeta\|_0,
\end{equation}
Then the estimator of $\bbeta$ defined in \eqref{eq03} can be obtained as
$\bbeta^{\diamond}=D \boldeta^{\diamond}.$

%%%%%%%%%%%%%%%%%%%%%%%%%%%%%%%%
\section{AFT-SDAR Algorithm}
\label{sec:1}
We first introduce some notation used throughout the paper.
Let $\|\boldeta\|_q=(\sum_{i=1}^{p}|\eta_{i}|^q)^{\frac{1}{q}}$ be the usual $q$ ($q\in [1,\infty]$) norm
of the vector $\boldeta=(\eta_1,\ldots,\eta_p)^{T}\in \mathbb{R}^{p}$.
Let $|A|$ denote the cardinality of the set $A$. Denote $\boldeta_A=(\eta_i,i\in A)\in \mathbb{R}^{|A|}$, $\boldeta|_A\in \mathbb{R}^{p}$ with its $i$th element $({\boldeta|_A})_i=\eta_i\text{1}(i\in A)$, where $\text{1}(\cdot)$ is the indicator function. Let
$\|\boldeta\|_{T,\infty}$ and $\|\boldeta\|_{\min}$
be the $T$th largest elements (in absolute value) and the minimum absolute value of $\boldeta$, respectively. Let $\|M\|_{\infty}$ denote the maximum value (in absolute value) of the matrix $M$. Let $\nabla  \mathcal{L}$ denote the gradient of
function $\mathcal{L}$. Denote $\bar{\bX}_A=(\bar{\bx}_j,j\in A)\in \mathbb{R}^{n\times |A|}$, where $\bar{\bx}_j$ is a column of a matrix $\bar{\bX}$. Let $\sigma_{\min}(\bar{\bX}_A^T\bar{\bX}_A)$ be the minimum eigenvalue of the matrix $\bar{\bX}_A^T\bar{\bX}_A$ and $\|\bar{\bX}_A\|_2$ be the spectrum norm of the matrix $\bar{\bX}_A$.
{Denote
$\sigma_{(\min,2T)}
=$$\min\{\sigma_{\min}(\bar{\bX}_{2T}^T\bar{\bX}_{2T}): \bar{\bX}_{2T}\in \mathbb{R}^{n\times 2T}$
$\mbox{consists}~ \mbox{of}~2T~\mbox{columns}~\mbox{of}~ \bar{\bX}\}$.}

The following lemma gives the KKT conditions of the minimizer of \eqref{eq1}.
\begin{lemma}\label{L1}
If $\boldeta^\diamond$ is a minimizer of \eqref{eq1}, then $\boldeta^\diamond$ satisfies
\begin{align}\label{eq2}
\left\{
\begin{aligned}
&\bd^{\diamond}=\bar{\bX}^{T}(\bar{\bY}-\bar{\bX}\boldeta^{\diamond})/n,\\
&\boldeta^{\diamond}=H_{\lambda}(\boldeta^{\diamond}+\bd^{\diamond}), \\
\end{aligned}
\right.
\end{align}
where the $i$th element of $H_{\lambda}(\cdot)$ is defined by
\begin{align}\label{eq3}
 (H_{\lambda}(\boldeta))_{i}=\left\{
\begin{aligned}
&0  ,&&&|\eta_{i}|<\sqrt{2\lambda},\\
&\eta_{i}  ,&&&|\eta_{i}|\geq\sqrt{2\lambda}.\\
\end{aligned}\right.
\end{align}
Conversely, if $\boldeta^{\diamond}$ and $\bd^{\diamond}$ satisfy \eqref{eq2}, then $\boldeta^{\diamond}$ is a local minimizer of \eqref{eq1}, and
$\bbeta^{\diamond}=D\boldeta^{\diamond}$ is a local minimizer of \eqref{eq03}.
\end{lemma}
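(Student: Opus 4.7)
The plan is to prove the two directions separately by reducing to a one-dimensional thresholding problem, exploiting the key normalization that each column of $\bar{\bX}$ has Euclidean norm $\sqrt{n}$.

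For the forward direction, I would fix an index $i$ and consider the univariate restriction of the objective in $\eta_i$ while freezing $\eta_j=\eta_j^\diamond$ for $j\neq i$. Completing the square in $\eta_i$, using $\|\bar{\bx}_i\|_2^2/n=1$, reduces $\mathcal{L}_2(\boldeta)$ (up to a constant) to $\frac{1}{2}(\eta_i-v_i)^2$ where $v_i=\eta_i^\diamond+d_i^\diamond$ and $d_i^\diamond=\bar{\bx}_i^T(\bar{\bY}-\bar{\bX}\boldeta^\diamond)/n$. Since $\boldeta^\diamond$ is a global minimizer, $\eta_i^\diamond$ must minimize $\frac{1}{2}(\eta_i-v_i)^2+\lambda \mathbf{1}(\eta_i\neq 0)$. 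Comparing the two candidates $\eta_i=0$ (objective $\frac{1}{2}v_i^2$) and $\eta_i=v_i$ (objective $\lambda$), the minimizer is $0$ when $|v_i|<\sqrt{2\lambda}$ and $v_i$ when $|v_i|\geq\sqrt{2\lambda}$, which is exactly $H_\lambda(v_i)$. This yields the second line of \eqref{eq2}; the first is just the definition.

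For the converse, suppose $(\boldeta^\diamond,\bd^\diamond)$ satisfies \eqref{eq2}, and let $A^\diamond=\mathrm{supp}(\boldeta^\diamond)$. The definition of $H_\lambda$ forces $d_i^\diamond=0$ and $|\eta_i^\diamond|\geq\sqrt{2\lambda}$ for $i\in A^\diamond$, and $|d_i^\diamond|<\sqrt{2\lambda}$ for $i\notin A^\diamond$. Given any perturbation $\mathbf{h}$ with $\|\mathbf{h}\|_\infty<\delta_0$, where $\delta_0:=\min\{\|\boldeta^\diamond\|_{\min},\sqrt{\lambda/2}\}$, the support $S$ of $\boldeta^\diamond+\mathbf{h}$ necessarily contains $A^\diamond$. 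Expanding,
\[
\mathcal{L}_2(\boldeta^\diamond+\mathbf{h})-\mathcal{L}_2(\boldeta^\diamond)=-\mathbf{h}^T\bd^\diamond+\frac{1}{2n}\|\bar{\bX}\mathbf{h}\|_2^2,
\]
and since $\bd^\diamond$ vanishes on $A^\diamond$, only indices in $S\setminus A^\diamond$ contribute to $\mathbf{h}^T\bd^\diamond$. Using $|d_i^\diamond|<\sqrt{2\lambda}$ and $|h_i|<\delta_0\le\sqrt{\lambda/2}$ on $S\setminus A^\diamond$, the linear term is strictly dominated by the penalty increase $\lambda|S\setminus A^\diamond|$, giving
\[
\bigl[\mathcal{L}_2(\boldeta^\diamond+\mathbf{h})+\lambda\|\boldeta^\diamond+\mathbf{h}\|_0\bigr]-\bigl[\mathcal{L}_2(\boldeta^\diamond)+\lambda\|\boldeta^\diamond\|_0\bigr]\geq 0.
\]
Hence $\boldeta^\diamond$ is a local minimizer of \eqref{eq1}.

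Finally, the last assertion transfers to \eqref{eq03} via the invertible change of variables $\boldeta=D^{-1}\bbeta$: since $D$ is a positive diagonal matrix with $\mathrm{supp}(D\boldeta)=\mathrm{supp}(\boldeta)$, the $\ell_0$ penalty is preserved and local minimality of $\boldeta^\diamond$ for $\mathcal{L}_2+\lambda\|\cdot\|_0$ is equivalent to local minimality of $\bbeta^\diamond=D\boldeta^\diamond$ for $\mathcal{L}_1+\lambda\|\cdot\|_0$. I expect the only subtle step to be the converse, specifically choosing $\delta_0$ so that both (i) the support cannot shrink below $A^\diamond$, and (ii) the linear term cannot overcome the constant-size penalty jump from enlarging the support; the rest is routine algebra.
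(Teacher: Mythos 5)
Your proof is correct and follows essentially the same route as the paper's: a coordinate-wise reduction to a univariate hard-thresholding problem (using $\|\bar{\bx}_i\|_2^2=n$) for the forward direction, and a perturbation expansion comparing the linear term $\bh^T\bd^\diamond$ against the $\ell_0$-penalty jump for the converse. The only difference is cosmetic: you treat the converse in a single unified estimate with an explicit radius $\delta_0$ and a per-coordinate bound $|h_i d_i^\diamond|<\lambda$, whereas the paper splits into the cases $\bh_{I^\diamond}\neq 0$ and $\bh_{I^\diamond}=0$ and appeals to ``$\bh$ small enough''; your version is, if anything, slightly more explicit on that point.
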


%The proof of Lemma \ref{L1} is  given   in Appendix \ref{lemma1proof}.
%Huang et al. (\citeyear{huang2018constructive}).

Our proposed AFT-SDAR algorithm is based on solving the KKT equations
\eqref{eq2} iteratively.
Let $A^\diamond=\text{supp}(\boldeta^\diamond)$ and $I^\diamond=(A^\diamond)^c$.
Based on \eqref{eq2} and by the definition of $H_{\lambda}(\cdot)$, we have
\begin{equation*}
A^\diamond=\{i:|\eta^\diamond_{i}+ d^\diamond_{i}|\geq\sqrt{2\lambda}\},~~I^\diamond=\{i:|\eta^\diamond_{i}+ d^\diamond_{i}|<\sqrt{2\lambda}\},
\end{equation*}
and
\begin{align}
\label{eq5a}
\left\{
\begin{aligned}
&\boldeta_{I^\diamond}^{\diamond} =0\\
&\bd_{A^\diamond}^{\diamond} =0\\
&\boldeta_{A^\diamond}^{\diamond}=(\bar{\bX}_{A^\diamond}^{T}\bar{\bX}_{A^\diamond})^{-1}\bar{\bX}_{A^\diamond}^{T}\bar{\bY}\\
&\bd_{I^\diamond}^{\diamond}=\bar{\bX}_{I^\diamond}^{T}(\bar{\bY}-\bar{\bX}_{A^\diamond}\boldeta_{A^\diamond}^{\diamond})/n\\
&\bbeta^{\diamond}=D\boldeta^{\diamond}.
\end{aligned}
\right.
\end{align}
We solve these equations iteratively.
Let $\{\boldeta^{k},\bd^{k}\}$ be the values at the $k$th iteration,
and let $\{A^{k},I^{k}\}$ be the active and inactive sets based on
$\{\boldeta^{k},\bd^{k}\}$, where
\begin{equation}\label{eq4}
\begin{split}
&A^k=\{i:|\eta^k_{i}+ d^k_{i}|\geq\sqrt{2\lambda}\},\\
&I^k=\{i:|\eta^k_{i}+ d^k_{i}|<\sqrt{2\lambda}\}.
\end{split}
\end{equation}
Then based on \eqref{eq5a}, we calculate the updated values
\begin{equation*}
\{\boldeta^{k+1}_{I^{k}},\bd^{k+1}_{A^{k}},\boldeta^{k+1}_{A^{k}},\bd^{k+1}_{I^{k}},\bbeta^{k+1}\},
\end{equation*}
as follows:
\begin{align}\label{eq5}
\left\{
\begin{aligned}
&\boldeta^{k+1}_{I^{k}}=0\\
&\bd^{k+1}_{A^{k}} =0\\
&\boldeta^{k+1}_{A^{k}}=(\bar{\bX}_{A^k}^{T}\bar{\bX}_{A^k})^{-1}\bar{\bX}_{A^k}^{T}\bar{\bY}\\
&\bd^{k+1}_{I^{k}}=\bar{\bX}_{I^k}^{T}(\bar{\bY}-\bar{\bX}_{A^k}\boldeta_{A^k}^{k+1})/n\\
&\bbeta^{k+1}=D\boldeta^{k+1}.\\
\end{aligned}
\right.
\end{align}
Suppose that $\|\boldeta^*\|_{0}=\|\bbeta^*\|_{0}=K\leq T$ for some $T \ge 1$,
where $\boldeta^*=D^{-1}\bbeta^*$.
At the $k$th iteration, we set
\begin{align}\label{eq6}
\sqrt{2\lambda}=\parallel\boldeta^k+ \bd^k\parallel_{T,\infty}
\end{align}
 in \eqref{eq4}. Hence $|A^k|=T$ in every iteration due to this $\lambda$. Note that the tuning parameter $\lambda$ is expressed in terms of $T$. We will use a data-driven procedure to tune the cardinality $T$ in Section \ref{sec:5}.

Let $\boldeta^0=D^{-1}\bbeta^0$ be an initial value, then we get a sequence of solutions $\{\boldeta^k,k\geq1\}$ by using \eqref{eq4}
 and \eqref{eq5} with the value of $\lambda$ given in \eqref{eq6}.
%{
%The above derivation is similar to the SDAR algorithm
%(Huang et al. \citeyear{huang2018constructive}),
%\citep{huang2018constructive},
%nevertheless,}
We introduce a step size $0 < \tau \le 1$
 in the definitions of the active and inactive sets as follows:
 \begin{equation}
 \label{eq4b}
\begin{split}
&A^k=\{i:|\eta^k_{i}+ \tau d^k_{i}|\geq\sqrt{2\lambda}\},\\
&I^k=\{i:|\eta^k_{i}+ \tau d^k_{i}|<\sqrt{2\lambda}\},
\end{split}
\end{equation}
with $\sqrt{2\lambda}=\parallel\boldeta^k+ \tau \bd^k\parallel_{T,\infty}$.
The step size $\tau$ plays the role of weighing the importance of  $\boldeta^k$ and $\bd^k$ in determining the active and inactive sets.

{In order to bound the estimation error of the sequences generated by AFT-SDAR, we need some regularity conditions on the covariate matrix. % as similarly considered by
 %Huang et al. (\citeyear{huang2018constructive}).
%\cite{huang2018constructive}.
 Thanks to
 this step size $\tau$, we can replace the sparse Riesz condition  used in %Huang et al. (\citeyear{huang2018constructive})
\cite{huang2018constructive}
 in analyzing the SDAR   to the weakest condition possible, which  is necessary and  sufficient for model identification even in high-dimensional linear regression, see Section \ref{sec:2}  for detail.
}

We describe the AFT-SDAR algorithm in detail in  Algorithm \ref{alg:1}.
%%%%%%%%%%%%%%%%%%%%%%%%%%%%%%%%%%%%%%%%
\begin{algorithm}[!ht]
	\caption{AFT-SDAR}
    \label{alg:1}
	\begin{algorithmic}[1]
\STATE Input: $\boldeta^0=D^{-1}\bbeta^0$, $\bd^0=\bar{\bX}^T(\bar{\bY}-\bar{\bX}\boldeta^0)/n$, $\tau$, T; $k=0$
		\FOR{$k= 0,1,\ldots,$}
\STATE $A^k=\big{\{}j:|\eta^k_j+\tau d^k_{j}|\geq\|\boldeta^k+\tau \bd^k\|_{T,\infty}\big{\}}$, $I^{k}=(A^{k})^c$.
\STATE $\boldeta^{k+1}_{I^k}=0$.
\STATE $\bd^{k+1}_{A^k}=0$. \STATE $\boldeta^{k+1}_{A^{k}}=(\bar{\bX}_{A^k}^{T}\bar{\bX}_{A^k})^{-1}\bar{\bX}_{A^k}^{T}\bar{\bY}$.
\STATE $\bd^{k+1}_{I^{k}}=\bar{\bX}_{I^k}^{T}(\bar{\bY}-\bar{\bX}_{A^k}\boldeta_{A^k}^{k+1})/n$.
\STATE $\bbeta^{k+1}=D\boldeta^{k+1}$.
\STATE $\textbf{if}~A^{k}=A^{k+1}$, \textbf{then}
\STATE Stop and denote the last iteration $\hat{\bbeta}$, $\boldeta_{\hat{A}}$, $\boldeta_{\hat{I}}$, $\bd_{\hat{A}}$, $\bd_{\hat{I}}$.
\STATE \textbf{else}
\STATE $k=k+1$
\STATE \textbf{end if}
\ENDFOR
\STATE Output:  $\hat{\bbeta}=D\cdot\big{(}\boldeta^{\mathrm{T}}_{\hat{A}}, \boldeta^{\mathrm{T}}_{\hat {I}}\big{)}^{\mathrm{T}}$ and $\hat{\boldeta}=\big{(}\boldeta^{\mathrm{T}}_{\hat {A}}, \boldeta^{\mathrm{T}}_{\hat {I}}\big{)}^{\mathrm{T}}$~as the estimates of $\bbeta^*$ and $\boldeta^*$ respectively.
\end{algorithmic}
\end{algorithm}
%%%%%%%%%%%%%%%%%%%%%%%%%%%%

In Algorithm \ref{alg:1}, we terminate the computation when $A^{k}=A^{k+1}$ for some $k$, because the solution sequence generated by AFT-SDAR will not change afterwards. In Section \ref{sec:2}, we provide sufficient conditions under which
$A^{k}=A^{k+1}=A^*$ with high probability, where $A^*=\text{supp}(\boldeta^*)=\text{supp}(\bbeta^*)$,
that is, the support of the underlying regression coefficient can be recovered in finite many steps.
%%%%%%%%%%%%%%%%%%%%%
\section{Theoretical Properties}
\label{sec:2}
In this section, we consider the finite-step error bound for the solution sequence computed based on Algorithm \ref{alg:1}.
We also study the probabilistic and nonasymptotic $\ell_{\infty}$  error bound
for the solution sequence.

We first consider the deterministic error bounds for the solution sequence generated
based on AFT-SDAR.
{ We choose  the step size $\tau$ satisfies}
 \begin{equation}
 \label{stepa}
 0 < \tau<\frac{1}{ \sqrt{T}U}
 \end{equation}
with $ U \ge {\|\bar{\bX}\|_2^2}/{n}$, and let $L$ be a constant satisfying
\begin{equation}
\label{stepb}
0<L\leq\frac{\sigma_{(\min,2T)}}{n\sqrt{2T}}.
\end{equation}
\begin{theorem}\label{th1}
Suppose $T \ge K$ and set $\bbeta^0=0$ in Algorithm \ref{alg:1}.
Suppose \eqref{stepa} and \eqref{stepb} hold.
%%%%%%%%%%%%%%%%%%%%%%%
%\item(i)
For the solution at the $k$th iteration in Algorithm \ref{alg:1}, %terminates,
we have
{
\begin{align}
\label{etaerror}
\|\boldeta^k-\boldeta^*\|_{\infty}\leq& \sqrt{(K+T)(1+U/L)}(\sqrt{\xi})^k\|\boldeta^{*}\|_{\infty}+\frac{2}{L}\|\nabla \mathcal{L}_2(\boldeta^{*})\|_{\infty},
\end{align}
%%%%%%%%%%%%%%%%%%%%%%
%\item(ii) If
\begin{align}
\label{betaerror}
\|\bbeta^k-\bbeta^*\|_{\infty}\leq& \|D\|_{\infty}^2\sqrt{(K+T)(1+U/L)}(\sqrt{\xi})^k\|\bbeta^{*}\|_{\infty}
+\frac{2\|D\|_{\infty}^2}{L}\|\nabla {\mathcal L}_1(\bbeta^{*})\|_{\infty},
\end{align}
%%%%
where $\xi=1-\frac{2\tau L(1-\tau \sqrt{T}U)}{ \sqrt{T}(1+K)}\in(0,1).$
}
\end{theorem}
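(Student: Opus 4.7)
The plan is to establish \eqref{etaerror} by induction on $k$, deriving a one-step recursion
\begin{equation*}
\|\boldeta^{k+1}-\boldeta^*\|_\infty \le \sqrt{\xi}\,\|\boldeta^{k}-\boldeta^*\|_\infty + C\,\|\nabla\mathcal{L}_2(\boldeta^*)\|_\infty,
\end{equation*}
and then converting to \eqref{betaerror} by the change of variables $\bbeta=D\boldeta$. A useful preliminary observation is that the primal--dual update in \eqref{eq5} is globally consistent with the gradient: since $\boldeta^{k+1}_{A^k}$ is the OLS on $A^k$, a direct check gives $\bd^{k+1}=\bar{\bX}^T(\bar{\bY}-\bar{\bX}\boldeta^{k+1})/n=-\nabla\mathcal{L}_2(\boldeta^{k+1})$. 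Writing $\bd^*=-\nabla\mathcal{L}_2(\boldeta^*)$, it follows that $\bd^k-\bd^*=-\bar{\bX}^T\bar{\bX}(\boldeta^k-\boldeta^*)/n$ for all $k\ge 1$, hence
\begin{equation*}
(\boldeta^k-\boldeta^*)+\tau(\bd^k-\bd^*)=(\bI-\tau\bar{\bX}^T\bar{\bX}/n)(\boldeta^k-\boldeta^*),
\end{equation*}
a key identity tying the detection quantity to the previous error.

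Next I would split $\boldeta^{k+1}-\boldeta^*$ according to $A^k\sqcup I^k$. On $I^k$ the error is $-\boldeta^*|_{A^*\cap I^k}$. On $A^k$, the normal equations together with $\bar{\bY}=\bar{\bX}\boldeta^*+(\bar{\bY}-\bar{\bX}\boldeta^*)$ give
\begin{equation*}
\bar{\bX}_{A^k}^T\bar{\bX}_{A^k}(\boldeta^{k+1}_{A^k}-\boldeta^*_{A^k})=\bar{\bX}_{A^k}^T\bar{\bX}_{A^*\cap I^k}\boldeta^*_{A^*\cap I^k}+\bar{\bX}_{A^k}^T(\bar{\bY}-\bar{\bX}\boldeta^*),
\end{equation*}
which is inverted using the sparse minimum eigenvalue bound \eqref{stepb} since $|A^k\cup A^*|\le 2T$; this is what introduces the $1/L$ and the $\sqrt{(K+T)(1+U/L)}$ factors once one converts between $\ell_2$ bounds on a $2T$-sparse support and the $\ell_\infty$ norm. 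The crux is to control $\|\boldeta^*|_{A^*\cap I^k}\|_\infty$. Applying the support-detection rule \eqref{eq4b} to any pair $i\in A^*\cap I^k,\ j\in I^*\cap A^k$ yields $|\eta^k_i+\tau d^k_i|\le|\eta^k_j+\tau d^k_j|$, and using $\eta^*_j=0$ one obtains
\begin{equation*}
|\eta^*_i|\le 2\|(\boldeta^{k}-\boldeta^*)+\tau(\bd^{k}-\bd^*)\|_\infty+2\tau\|\bd^*\|_\infty.
\end{equation*}
The identity of the previous paragraph turns the first term into $\|(\bI-\tau\bar{\bX}^T\bar{\bX}/n)(\boldeta^k-\boldeta^*)\|_\infty$, and since $\boldeta^k-\boldeta^*$ is supported on a set of size at most $2T$, the sparse $\ell_\infty\to\ell_\infty$ norm of $\bI-\tau\bar{\bX}^T\bar{\bX}/n$ on $2T$-sparse vectors becomes the contraction factor.

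Combining the two pieces produces the one-step recursion, and the step-size condition \eqref{stepa} together with the sparse-eigenvalue bound \eqref{stepb} deliver precisely the rate $\sqrt{\xi}$ with $\xi=1-2\tau L(1-\tau\sqrt{T}U)/(\sqrt{T}(1+K))<1$. Iterating from $\boldeta^0=0$ and summing the geometric series $\sum_{j\ge 0}(\sqrt{\xi})^j\le 1/(1-\sqrt{\xi})$ yields \eqref{etaerror}. Bound \eqref{betaerror} then follows from $\|\bbeta^k-\bbeta^*\|_\infty\le\|D\|_\infty\|\boldeta^k-\boldeta^*\|_\infty$ and, via $\bar{\bX}=\tilde{\bX} D$, the identity $\nabla\mathcal{L}_2(\boldeta^*)=D\nabla\mathcal{L}_1(\bbeta^*)$ giving $\|\nabla\mathcal{L}_2(\boldeta^*)\|_\infty\le\|D\|_\infty\|\nabla\mathcal{L}_1(\bbeta^*)\|_\infty$; one further factor of $\|D\|_\infty$ appears through $\|\boldeta^*\|_\infty$, producing the $\|D\|_\infty^2$ coefficient. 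The main obstacle is the crux step, namely extracting $\sqrt{\xi}<1$ from the \emph{weakest possible} hypothesis \eqref{stepb} on $\bar{\bX}$: it is exactly the step size $\tau$ of \eqref{stepa} that lets the sparse $\ell_\infty$-operator-norm of $\bI-\tau\bar{\bX}^T\bar{\bX}/n$ close the loop without invoking a sparse Riesz condition, which is the decisive departure from the SDAR analysis of Huang et al.\ (2018).
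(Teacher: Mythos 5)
Your proposal takes a genuinely different route from the paper, and the route has a gap at exactly the step you call the crux. The paper does \emph{not} run a one-step recursion on the iterate error $\|\boldeta^{k}-\boldeta^*\|_\infty$. Instead it proves geometric decay of the \emph{objective gap}: Lemma \ref{L9.1} is a gradient-dominance inequality $\|\nabla_{B^k}\mathcal{L}_2(\boldeta^k)\|_1\|\nabla_{B^k}\mathcal{L}_2(\boldeta^k)\|_\infty\geq 2L\zeta[\mathcal{L}_2(\boldeta^k)-\mathcal{L}_2(\boldeta^*)]$, Lemma \ref{L9.2} converts the per-step decrease of $\mathcal{L}_2$ along the newly activated coordinates into $\mathcal{L}_2(\boldeta^{k+1})-\mathcal{L}_2(\boldeta^*)\leq\xi[\mathcal{L}_2(\boldeta^{k})-\mathcal{L}_2(\boldeta^*)]$, and Lemma \ref{L9.3} recovers the $\ell_\infty$ bound a posteriori by solving a univariate quadratic inequality coming from the lower bound in \eqref{UL}. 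The mixed $\ell_1\times\ell_\infty$ form of \eqref{UL} is what lets the condition $\sigma_{(\min,2T)}>0$ suffice, and it is also why the gradient term enters with the clean constant $2/L$ rather than through a geometric-series sum.

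The concrete gap in your argument is the contraction claim for $\bI-\tau\bar{\bX}^T\bar{\bX}/n$ on $2T$-sparse vectors. Under \eqref{stepa}--\eqref{stepb} alone you have only a lower bound $\sigma_{(\min,2T)}>0$ and an upper bound $U\geq\|\bar{\bX}\|_2^2/n$; these do not make the restricted $\ell_\infty\!\to\!\ell_\infty$ operator norm of $\bI-\tau\bar{\bX}^T\bar{\bX}/n$ less than one. Since the columns of $\bar{\bX}$ are normalized, that operator norm on a support $S$ equals $1-\tau+\tau\max_{i\in S}\sum_{j\in S,\,j\neq i}|\bar{\bx}_i^T\bar{\bx}_j|/n$, which is below $1$ only under a mutual-coherence (sparse Riesz type) condition --- precisely the assumption the paper is written to avoid. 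Passing through the $\ell_2$ norm instead costs a factor $\sqrt{2T}$ when returning to $\ell_\infty$, giving at best $(1-\tau\sigma_{(\min,2T)}/n)\sqrt{2T}>1$ for small $\tau$. So the step size does not ``close the loop'' in your recursion; it only helps in the paper's functional-value argument, where it enters through the term $\tau/\sqrt{T}-U\tau^2$ in the proof of Lemma \ref{L9.2}. Even granting your recursion, summing $\sum_j(\sqrt{\xi})^j$ would leave a $1/(1-\sqrt{\xi})$ factor on $\|\nabla\mathcal{L}_2(\boldeta^*)\|_\infty$, which does not match the stated bound. (Your preliminary identity $\bd^{k+1}=-\nabla\mathcal{L}_2(\boldeta^{k+1})$ and the bound $|\eta_i^*|\leq 2\|(\boldeta^k-\boldeta^*)+\tau(\bd^k-\bd^*)\|_\infty+2\tau\|\bd^*\|_\infty$ are correct, but they are not used by the paper and do not rescue the contraction step.)
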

%%%%%%%%%%%%%%%%%%%%
%The proof of Theorem \ref{th1} is given in Appendix \ref{th1proof}.
We observe that the error bound consists of two terms as indicated in Theorem \ref{th1}. For any given values
of observations, the first term converges to
zero exponentially. The magnitude of the second term is determined by
$\nabla {\mathcal L}_2(\boldeta^{*})$ in
\eqref{etaerror}
and
$\nabla {\mathcal L}_1(\bbeta^{*})$ in \eqref{betaerror},
which are given by the gradient of the weighted least squares criterion at the underlying parameter value. Therefore, under the model assumption, their expected values are zero and should be concentrated in a small neighborhood of zero.

To study the probabilistic and nonasymptotic error bounds of the solution sequences $\boldeta^k$ and $\bbeta^k$, we make the following assumptions.
\begin{enumerate}[(C1)]
\item
\label{cond2}
There exists a constant $b \in (0,\infty)$ such that $\|D\|_{\infty}\leq b$.
\item
\label{cond3}
The error terms $\epsilon_1,\ldots,\epsilon_n$ are independent and   identically distributed with mean zero and finite variance $\sigma^2$. Furthermore, they are subgaussian, in the sense that there exist some constants $K_1, ~K_2>0$ such that $P(|\epsilon_i|>C)\leq K_2\exp(-K_1C^2)$ for all $C\geq 0$ and all $i$.
\item
\label{cond4}
The covariates are bounded, that is, there exists a constant $B>0$ such that $\max_{1\le i \le n, 1\le j \le p} |x_{ij}|\leq B.$
\item
\label{cond5}
The  error terms $(\epsilon_1,\ldots,\epsilon_n)$ are independent of the Kaplan-Meier weights $(w_1,\ldots,w_n)$.
\item
\label{cond7}
There exists some positive constants $C_1$ and $C_2$ such that
$\|\bbeta^*_{A^*}\|_{\min}\geq \frac{3b^2}{L}\varepsilon_1$, where
\begin{equation*}
\varepsilon_1= C_1\left(\frac{\log(p)}{n}\right)^{\frac{1}{2}}\left(\sqrt{\frac{2C_2\log(p)}{n}}+\frac{4\log(2p)}{n}+C_2\right)^{\frac{1}{2}}.
\end{equation*}
\end{enumerate}
%%%%%%%%
\begin{remark}
Condition
(C\ref{cond2})
 constrains the maximum absolute value of the matrix $D$.
 Condition (C\ref{cond3}) on the subgaussion tails of the error terms is standard in high-dimensional regression models. Condition (C\ref{cond4}) is assumed for technical convenience. It can be relaxed to $\max_{1\le i \le n, 1\le j \le p} |x_{ij}|\leq B$ with high probability.
 Moreover, conditions %(C3)-(C6),
 (C\ref{cond3})-(C\ref{cond5}) are assumed to ensure that
 $\|\nabla {\cal L}_1(\bbeta^*)\|_\infty$ is small.
 Condition
 (C\ref{cond7}) assumes that the signal is not too small, which
 is  needed for the target signal to be detectable.
\end{remark}
\begin{theorem}\label{th1b}
%(iii)
Suppose $T \ge K$ and set $\bbeta^0=0$ in Algorithm \ref{alg:1}.
Suppose \eqref{stepa} and \eqref{stepb} hold.
If
(C\ref{cond2})-(C\ref{cond7})
hold,
then with probability at least
$1- [{(\log p )}/{n}]^{{1}/{4}}$,
\begin{align*}
\|\bbeta^k-\bbeta^*\|_{\infty}\leq & b^2\sqrt{(K+T)(1+U/L)}(\sqrt{\xi})^k\|\bbeta^{*}\|_{\infty}+\frac{2b^2}{L}\varepsilon_1.
\end{align*}
Therefore,
$$\|\bbeta^k-\bbeta^*\|_{\infty}\leq \mathcal{O}\left( \sqrt{\frac{\log(p)}{n}}\right)$$
with high probability if $k \geq \mathcal{O}\left( \log_{\frac{1}{\xi}}\frac{n}{\log(p)} \right).$
\end{theorem}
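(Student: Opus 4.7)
The plan is to invoke Theorem \ref{th1} and then upgrade its deterministic error bound to a probabilistic one by controlling $\|\nabla \mathcal{L}_1(\bbeta^*)\|_\infty$ with high probability. Under (C\ref{cond2}) we have $\|D\|_\infty\le b$, so substituting directly into \eqref{betaerror} yields
\[
\|\bbeta^k-\bbeta^*\|_\infty \le b^2\sqrt{(K+T)(1+U/L)}(\sqrt{\xi})^k\|\bbeta^*\|_\infty + \frac{2b^2}{L}\|\nabla \mathcal{L}_1(\bbeta^*)\|_\infty.
\]
The entire problem thus reduces to establishing the tail bound
\[
\Pr\!\bigl(\|\nabla \mathcal{L}_1(\bbeta^*)\|_\infty \le \varepsilon_1\bigr) \ge 1 - [(\log p)/n]^{1/4}.
\]

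To bound the gradient, I would expand
\[
[\nabla \mathcal{L}_1(\bbeta^*)]_j = -\frac{1}{n}\sum_{i=1}^{n} w_{(i)}\, x_{(i),j}\bigl(Y_{(i)}-\bx_{(i)}^T\bbeta^*\bigr),\qquad j=1,\ldots,p,
\]
and use (C\ref{cond5}) to condition on the Kaplan--Meier weights $w_{(i)}$, so the inner sum becomes a weighted sum of independent sub-Gaussian terms by (C\ref{cond3}). Condition (C\ref{cond4}) then supplies a sub-Gaussian tail whose variance proxy is proportional to $B^2 \sum_i w_{(i)}^2$, and a union bound over the $p$ coordinates produces the outer $\sqrt{\log p/n}$ factor in $\varepsilon_1$. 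The random weights themselves are controlled by combining the elementary bound $w_{(i)}\le 1/(n-i+1)$ with a concentration inequality for the Kaplan--Meier survival function estimator, which is where the additional factor $[(\log p)/n]^{1/4}$ enters the probability statement. Reassembling these pieces recovers the composite structure of $\varepsilon_1$, namely $\sqrt{\log p/n}$ times the square root of a Bernstein-type quantity of order $\sqrt{\log p/n}+\log p/n + C_2$, corresponding to the variance contribution of the censoring part and the sub-Gaussian tail of the error.

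The principal obstacle is handling the Kaplan--Meier weights, which are data-dependent and correlated with the ordered responses and covariates; standard Hoeffding or Bernstein inequalities for independent sums do not apply directly, and the decoupling between the weights and the residuals must be enacted through (C\ref{cond5}). For the final asymptotic statement, I would simply identify when the geometric term dominates the stochastic error: $(\sqrt{\xi})^k \|\bbeta^*\|_\infty \lesssim \sqrt{\log p/n}$ holds once
\[
k \ge \frac{1}{\log(1/\xi)} \log\!\left(\frac{n\,\|\bbeta^*\|_\infty^2}{\log p}\right) = \mathcal{O}\!\left(\log_{1/\xi}\!\frac{n}{\log p}\right),
\]
after which both terms are of order $\sqrt{\log p/n}$ on the high-probability event, completing the proof.
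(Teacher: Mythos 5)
Your first step---substituting $\|D\|_\infty\le b$ from (C\ref{cond2}) into \eqref{betaerror} and reducing everything to a tail bound on $\|\nabla\mathcal{L}_1(\bbeta^*)\|_\infty$, then choosing $k$ so that the geometric term falls below $\sqrt{\log p/n}$---is exactly the paper's argument, and that part is fine. Where you diverge is in how the tail bound is obtained, and this is where your proposal has a genuine gap. The paper does not run a concentration argument at all: it invokes Lemma \ref{Le} (Lemma 1 of \cite{huang2010variable}), which bounds the \emph{expectation} $E\|\nabla\mathcal{L}_1(\bbeta^*)\|_\infty\le C_1\sqrt{\log p/n}\,(\cdots)^{1/2}$ under (C\ref{cond3})--(C\ref{cond5}), and then applies Markov's inequality with the threshold $\varepsilon_1$ taken a factor $(\log p/n)^{-1/4}$ larger than this expectation bound. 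The failure probability $[(\log p)/n]^{1/4}$ is precisely the ratio of the expectation bound to $\varepsilon_1$; it is a signature of the Markov step, not of any concentration phenomenon.

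Your proposed route---condition on the Kaplan--Meier weights via (C\ref{cond5}), apply a sub-Gaussian tail with variance proxy $B^2\sum_i w_{(i)}^2$, union bound over $p$ coordinates, and control the weights by $w_{(i)}\le 1/(n-i+1)$ plus KM concentration---is a different and substantially harder program, and you stop exactly at its hardest point: you yourself name the data-dependence of the weights (they are functions of the ordered $(Y_{(i)},\delta_{(i)})$, hence correlated with the residuals appearing in the same sum) as ``the principal obstacle'' without resolving it. Moreover, a genuine concentration argument of the type you sketch would naturally deliver a failure probability polynomial in $p$ (from the union bound), not $[(\log p)/n]^{1/4}$; your attribution of that exponent to KM concentration is not substantiated and I do not see how it would emerge from your construction. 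So the key probabilistic ingredient of the theorem is asserted rather than proved. (One caveat in your favor: the paper itself states $\varepsilon_1$ with a leading $(\log p/n)^{1/2}$ in condition (C\ref{cond7}) but uses $(\log p/n)^{1/4}$ in the proof of Theorem \ref{th1b}; that inconsistency is the paper's, not yours.)
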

%%%%%%%%%

{
To derive   the sharp estimation error bound in Theorem \ref{th1b}, we need $\xi \in (0,1)$. This is guaranteed
by choosing $\tau $ stratifying  \eqref{stepa} and $L$ satisfying  \eqref{stepb}, which only requires $\sigma_{(\min,2T)} >0$. This is a weakest possible condition even  in high-dimensional linear regression model $\bar{\bY} = \bar{\bX}\bbeta^* + \bepsilon$, where, $\|\bbeta^*\|_0 \leq T$, since $\sigma_{(\min,2T)} >0$  is equivalent to the condition that the linear model is identifiable.
To be precise, let $\widetilde{\bY} = \bar{\bX}\widetilde{\bbeta}^* +\bepsilon$  and   $\|\widetilde{\bbeta}^*\|_0 \leq T.$   If we wish to derive $\bbeta^* = \widetilde{\bbeta}^*$ from $\widetilde{\bY }= \bar{\bY}$, i.e.,  from $\bar{\bX}(\bbeta^* - \widetilde{\bbeta}^*)=0$, we need $\sigma_{(\min,2T)} >0$, which is
a sufficient and necessary condition.
%%%%
However, in the analysis of  SDAR algorithm
%(Huang et al. \citeyear{huang2018constructive}),
\citep{huang2018constructive},
the authors assumed  stronger conditions, i.e., sparse Riesz condition (SRC)
%(Zhang and Huang \citeyear{ZhangCun-Hui2008Tsab})
\citep{ZhangCun-Hui2008Tsab}
to obtain the  estimation error bound. The condition $\sigma_{(\min,2T)} >0$ is also  weaker  than the  kinds of restricted strong convexity conditions used in
bounding the estimation error for the global solutions in penalized convex and nonconvex regressions, see  \cite{Zhangzhang}, \cite{mw2019} and the references therein.
}

%%%%%%%%%%%%%%%%%%
%The proof of Theorem \ref{th1b} is given in Appendix \ref{th1proof}.
This result gives nonasymptotic error bound of the solution sequence.
In particular, when $\log p = o(n)$, the solution sequence converges
to the underlying regression coefficient with high probability.
%%%%%%%%%%%%%%%%%%%%%%%%%%
%\subsection{Support recovery}
%\label{sec:4}

The following theorem establishes the support recovery property of
AFT-SDAR.
\begin{theorem}
\label{th2}
Suppose $T \ge K$ and set $\bbeta^0=0$ in Algorithm \ref{alg:1}.
Suppose \eqref{stepa} and \eqref{stepb} hold.
If %(C2)-(C6)
(C\ref{cond2})-(C\ref{cond7})
hold,
then with probability at least
$1- [{(\log p )}/{n}]^{{1}/{4}}$,
$A^*\subseteq A^k$ if $k> \log_{\frac{1}{\xi}} 9 (T+K)(1+U/L)r^2 b^4$, where $r = \frac{\|\bbeta^*\|_{\infty} }{\|\bbeta^*_{A^*}\|_{\min}}$ is the ratio of the largest absolute coefficient over the smallest absolute nonzero coefficient of $\bbeta^*$.
\end{theorem}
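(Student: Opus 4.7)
My plan is to leverage a clean two-step argument: first, the update rule of Algorithm \ref{alg:1} forces $\mathrm{supp}(\bbeta^{k+1}) \subseteq A^k$, so that $A^* \subseteq A^k$ whenever $\bbeta^{k+1}$ is nonzero on $A^*$; second, Theorem \ref{th1b} combined with (C\ref{cond7}) ensures this nonzero property for $k$ exceeding the stated logarithmic threshold.

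For the first step, since Algorithm \ref{alg:1} sets $\boldeta^{k+1}_{I^k} = 0$ and $\bbeta^{k+1} = D\boldeta^{k+1}$ with $D$ a strictly positive diagonal matrix, we have $\mathrm{supp}(\bbeta^{k+1}) = \mathrm{supp}(\boldeta^{k+1}) \subseteq A^k$. Therefore a sufficient condition for $A^* \subseteq A^k$ is that $|\beta_i^{k+1}| > 0$ for every $i \in A^*$, and this is in turn implied by
\[
\|\bbeta^{k+1} - \bbeta^*\|_\infty < \|\bbeta^*_{A^*}\|_{\min},
\]
since on $A^*$ we have $|\beta_i^*| \geq \|\bbeta^*_{A^*}\|_{\min}$ and a triangle inequality then forces $\beta_i^{k+1}$ to be nonzero.

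For the second step, Theorem \ref{th1b} applied with iteration index $k+1$ yields, on an event of probability at least $1 - [(\log p)/n]^{1/4}$,
\[
\|\bbeta^{k+1} - \bbeta^*\|_\infty \leq b^2 \sqrt{(K+T)(1+U/L)}(\sqrt{\xi})^{k+1}\|\bbeta^*\|_\infty + \tfrac{2b^2}{L}\varepsilon_1.
\]
Condition (C\ref{cond7}) gives $(2b^2/L)\varepsilon_1 \leq (2/3)\|\bbeta^*_{A^*}\|_{\min}$, which reduces the task to forcing the exponentially decaying term to be at most $(1/3)\|\bbeta^*_{A^*}\|_{\min}$. Dividing through and writing $r = \|\bbeta^*\|_\infty/\|\bbeta^*_{A^*}\|_{\min}$, this requires
\[
(\sqrt{\xi})^{k+1} < \frac{1}{3 b^2 r \sqrt{(K+T)(1+U/L)}},
\]
which after taking $\log_{1/\xi}$ and squaring is exactly the threshold $k > \log_{1/\xi}[9(T+K)(1+U/L)r^2 b^4]$ (the shift by one being absorbed).

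The main obstacle is essentially already carried by Theorem \ref{th1b}; beyond that, the only conceptual step is to notice that the zeroing step $\boldeta^{k+1}_{I^k} = 0$ inside the algorithm confines the support of the next iterate to $A^k$, so that an $\ell_\infty$ error bound immediately translates into a support inclusion once the beta-min gap strictly exceeds the error. The constants ($3$, $9$, and the powers of $b$ and $r$) are calibrated precisely to split the Theorem \ref{th1b} bound into a "signal dominates stochastic noise" piece (handled by (C\ref{cond7})) and a "signal dominates optimization error" piece (handled by the iteration count).
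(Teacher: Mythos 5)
Your proof is correct and follows essentially the same route as the paper's: invoke the high-probability $\ell_\infty$ bound of Theorem \ref{th1b}, use (C\ref{cond7}) to bound the noise term by $\tfrac{2}{3}\|\bbeta^*_{A^*}\|_{\min}$ and the iteration count to bound the geometric term by $\tfrac{1}{3}\|\bbeta^*_{A^*}\|_{\min}$, then convert the resulting strict beta-min gap into support inclusion. If anything, your version is slightly more careful than the paper's, which applies the bound at index $k$ and asserts $A^*\subseteq A^k$ directly, whereas your use of index $k+1$ together with $\mathrm{supp}(\boldeta^{k+1})\subseteq A^k$ makes the final implication airtight.
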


%The proof of Theorem \ref{th2} is given in Appendix \ref{th2proof}.
%%%%%%%%%%%%%%%%%%%%%%%%%
%\begin{remark}
Theorem \ref{th2}  demonstrates that the estimated support via AFT-SDAR will contain the true support with the cost at most   $\mathcal{O}(\log (T))$ number of iterations if the minimum signal strength of $\bbeta^*$ is above the detectable threshold  $ \mathcal{O}(\sqrt{\frac{\log(p)}{n}})$.
Further, if we set $T = K$ in AFT-SDAR, then the stopping condition $A^k = A^{k+1}$ will hold if $k \geq  \mathcal{O}(\log (K))$ since the estimated supports coincide with the true support then.
As a consequence, the Oracle estimator will be recovered in    $\mathcal{O}(\log (K))$ steps.

Finally, we note that an important aspect of the results above is that they directly concern the sequence of solutions generated based on Algorithm \ref{alg:1}, rather than a theoretically defined global solution to the nonconvex $\ell_0$-penalized weighted least squares criterion. Thus there is no gap between our theoretical results and computational algorithm.

%\end{remark}
%%%%%%%%%%%%%%%%%%%%%%%%%%%%%%%%%%%%%%%
\section{Adaptive AFT-SDAR}
\label{sec:5}
In practice, the sparsity level of the true parameter value $\boldeta^*$ or $\bbeta^*$ is unknown. Therefore, we can regard $T$ as a tuning parameter.
Let $T$ increase from 0 to $Q$, which is a given large enough integer. In general, we set $Q=\alpha n/\log(n)$ as suggested by
%Fan and Lv (\citeyear{fan2008sure}),
\cite{fan2008sure},
where $\alpha$ is a positive constant.
Then we can obtain a set of solutions paths: $\{\widehat{\boldeta}(T):T=0,1,\ldots,Q\}$, where $\widehat{\boldeta}(0)=0$. Finally, we use the cross-validation method or HBIC criteria
%(Wang et al. \citeyear{wang2013calibrating})
\citep{wang2013calibrating}
to determine $\widehat{T}$,  the value of $T$. Thus we can take $\widehat{\boldeta}$ with $T=\widehat{T}$ as the estimate of $\boldeta^{*}$.
We can  also run Algorithm  \ref{alg:1} until $\|\boldeta^k-\boldeta^{k + 1}\|<\varepsilon$ by increasing $T$, where $\varepsilon$ is a given tolerance level. Then $\boldeta^k$ can be taken as the estimation of $\boldeta^*$.
Furthermore, we can gradually increase $T$ to run Algorithm \ref{alg:1} until the residual sum of squares is less than a given tolerate level $\varepsilon$, then  output $\boldeta^k$ at this time to terminate the calculation. In summary, we get an adaptive AFT-SDAR algorithm as described in Algorithm \ref{alg:2}.
%%%%%%%%%%%%%%%%%%%%%%%%%%%%%%%%%%%
\begin{algorithm}[!ht]
\caption{AFT-ASDAR}
\label{alg:2}
\begin{algorithmic}[1]
\STATE Input: $\boldeta^0=D^{-1}\bbeta^0$, $\bd^0=\bar{\bX}^T(\bar{\bY}-\bar{\bX}\boldeta^0)/n$, $\tau$, an integer $\vartheta$, an integer Q, an early stopping criterion (optional). Set $k=1$.
\FOR{$k=1,2,\ldots,$}
\STATE Run Algorithm \ref{alg:1} with~$T=\vartheta k$ and with initial value $\boldeta^{k-1}$, $\bd^{k-1}$. Denote the output  by $\boldeta^k$, $\bd^k$.
\STATE \textbf{if} the early stopping criterion is satisfied or $T>Q$,
\textbf{then}
\STATE \quad stop
\STATE \textbf{else}
\STATE \quad $k=k+1$
\STATE \textbf{end if}
\ENDFOR
\STATE Output: $\widehat{\boldeta}(\widehat{T})$ and $\widehat{\bbeta}(\widehat{T})=D\cdot\widehat{\bbeta}(\widehat{T})$~~as the estimates of $\boldeta^*$ and $\bbeta^*$ respectively.
\end{algorithmic}
\end{algorithm}
%%%%%%%%%%%%%%%%%%%%%%%%%%%%%%%%%%%%%%
\section{Numerical studies}
\label{sec:6}
In this section, we conduct simulation studies  and real data analysis
to illustrate the effectiveness of the proposed method. We compare the  simulation results of AFT-SDAR/AFT-ASDAR with those of Lasso and MCP in  terms of accuracy and efficiency.
We also evaluate the performance
in terms of the effect of the  model parameters including the sample size $n$,  the variable dimension $p$, the correlation measure $\rho$ among covariates and the censoring rate  $c.r$.
Moreover,  we examine the average number of iterations for AFT-SDAR
to converge.
We also apply AFT-SDAR to a real data set to illustrate its application.
We implemented Lasso and MCP for the AFT model
using the coordinate descent algorithm %%%%(\citeyear{breheny2011coordinate}).
\citep{breheny2011coordinate}.

%%%%%%%%%%%%%%%%%%%%%%%%%%%%%%
\subsection{Accuracy and efficiency}
\label{sec:7}
We  generate a $n\times p$ random Gaussian matrix $\widetilde{\bX}$
whose entries are i.i.d. $\sim N(0,1)$. Then the design matrix $\bX$ is generated with $\bx_1=\widetilde{\bx}_1$, $\bx_p=\widetilde{\bx}_p$, and $\bx_j=\widetilde{\bx}_j+\rho(\widetilde{\bx}_{j+1}+\widetilde{\bx}_{j-1})$, $j=2,\ldots,p-1$. Here $\rho$ is a measure of the correlation among covariates. The underlying regression coefficient vector $\bbeta^*$ with
$K$ nonzero coefficients is generated
such that the $K$ nonzero coefficients in $\bbeta^*$ are uniformly
distributed in $(m_1,m_2)$, where $m_1=\sigma\sqrt{2\log{p}/n}$
and $m_2=100\cdot m_1$.
The $K$ nonzero coefficients are  randomly assigned to the $K$ components of $\bbeta^*$.
For each subject,
the responses $\ln(T_{i})=\bx_i^T\bbeta+\epsilon_i$, where $\epsilon_i$ is generated independently from $N(0,\sigma^2)$,
and the censoring variable $C_i$ is generated independently from the uniform distribution $U(0,\eta)$, where $\eta$ controls the censoring rate such that the desired censoring rate can be obtained.
We compare AFT-SDAR, AFT-ASDAR with Lasso and MCP  on the data generated from theses models.
In the implementation of AFT-ASDAR, we set $Q=n/\log(n)$, and terminate the computation if the residual $\|\bar{\bY}-\bar{\bX}\boldeta^k\|_2$ is smaller than $\varepsilon=\sqrt{n}\sigma$.
To examine the effect of  the correlation measure $\rho$,
we set $n=500$, $p=10000$, $K=20$, $\sigma=1$, $c.r=0.3$ and $\rho=0.3:0.3:0.9$, i.e., $\rho$ takes a grid of values from 0.3 to 0.9 with a step size 0.3.
%%%%%%%%%%%%%%%%%%%%%%table1
\begin{table}
\caption{Numerical results (the averaged relative error, CPU time) on data sets with $n=500$, $p=10000$, $K=20$, $\sigma=1$, $c.r=0.3$, $\rho=0.3:0.3:0.9$.}
\label{tab:1}
\begin{center}
\scalebox{0.91}{
\begin{tabular}{llrrr}
\hline\noalign{\smallskip}
 $\rho$  & Method & ReErr ($10^{-2}$) & Time(s)\\
\noalign{\smallskip}\hline\noalign{\smallskip}
0.3&Lasso& 10.49&10.57\\
&MCP&1.10&11.55\\
&AFT-SDAR ($\tau=1$)&0.51&4.44\\
&AFT-ASDAR ($\tau=1$)&0.52&4.60\\
&AFT-SDAR ($\tau=0.5$)&0.51&4.46\\
&AFT-ASDAR ($\tau=0.5$)&0.52&4.72\\
\noalign{\smallskip}\hline\noalign{\smallskip}
0.6&Lasso&11.07&12.95\\
&MCP&2.10&10.91&\\
&AFT-SDAR ($\tau=1$)&2.01&4.32\\
&AFT-ASDAR ($\tau=1$)&2.02&4.53\\
&AFT-SDAR ($\tau=0.5$)&1.93&4.62\\
&AFT-ASDAR ($\tau=0.5$)&1.93&4.89\\
\noalign{\smallskip}\hline\noalign{\smallskip}
0.9&Lasso& 11.40&10.78\\
&MCP&1.08&11.45\\
&AFT-SDAR ($\tau=1$)&0.65&4.42\\
&AFT-ASDAR ($\tau=1$)&0.65&4.63\\
&AFT-SDAR ($\tau=0.5$)&0.47&4.72\\
&AFT-ASDAR ($\tau=0.5$)&0.47&5.01\\
\noalign{\smallskip}\hline
\end{tabular}
}
\end{center}
\end{table}

%%%%%%%%%%%%%%%%%%%%%%table1

Table \ref{tab:1} shows the results
based on 100 independent replications of AFT-SDAR, AFT-ASDAR,
Lasso and MCP.
In Table \ref{tab:1}, the first column  gives the values of $\rho$,  the second column depicts the methods, the third column shows the averaged relative error (ReErr=$\frac{1}{100}\sum\|\hat{\bbeta}-\bbeta^*\|/\|\bbeta^*\|$), and
the fourth column shows the averaged CPU time.

%%\textcolor{black}{
It is clear from Table \ref{tab:1} that both AFT-SDAR  and AFT-ASDAR tend to have smaller  relative errors (ReErr) than those of Lasso and MCP.
When $\rho=0.6$ and 0.9,  AFT-SDAR  and AFT-ASDAR  have smaller relative errors at $\tau=0.5$   than  at $\tau=1$.
In terms of the speed,
AFT-SDAR and AFT-ASDAR are more than  twice as fast as Lasso and MCP for each $\rho$ and $\tau$, respectively.
For a wide range of the correlation measure $\rho$ and the step size $\tau$, AFT-SDAR and AFT-ASDAR perform well
in terms of relative error and computational speed. In addition, for
data with high correlations, choosing a step size less than the default value $1$ can lead to smaller relative errors.
%%}

\subsection{Support recovery}
\label{sec:8}
We now assess the support recovery performance
of AFT-ASDAR, Lasso  and MCP.
In AFT-ASDAR, we set the largest size of the support $Q=n/\log(n)$ \textcolor{black}{and the step size $\tau=1$}, and
use the HBIC criteria  to chose the cardinality $T$.
In the data generating models,
the rows of the $n\times p$ design matrix $\bX$ are i.i.d. $N(0,\Sigma)$, where $\Sigma_{ij}=\rho^{|i-j|}$, $1\leq i,j\leq p$.
 Let $R=m_2/m_1$, where $m_2=
\|\bbeta^*_{A^*}\|_{\max}$ and $m_1=\|\bbeta^*_{A^*}\|_{\min}=1$. The underlying regression coefficient vector $\bbeta^*\in \mathbb{R}^p$ is
generated in such a way that
the $K$ nonzero coefficients in $\bbeta^*$ are uniformly
distributed in $(m_1,m_2)$,
and $A^*$ is a randomly chosen subset of $\{1,\ldots,p\}$ with $|A^*|=K<n$.
The responses $\ln(T_{i})=\bx_i^T\bbeta+\epsilon_i$, where $\epsilon_i$'s are independently drawn from the normal distribution $N(0,\sigma^2)$.
The censoring variable $C_i$ is generated independently from the uniform distribution $U(0,\eta)$ as in Sect. \ref{sec:7}.
All the simulation results reported below are based on 100 independent replications.
%%%%%%%%%%%%%%%%%%%%%%%%%%%%Figure1
\begin{figure*}
  \centering
  % Requires \usepackage{graphicx}
   \includegraphics[width=0.75\textwidth]{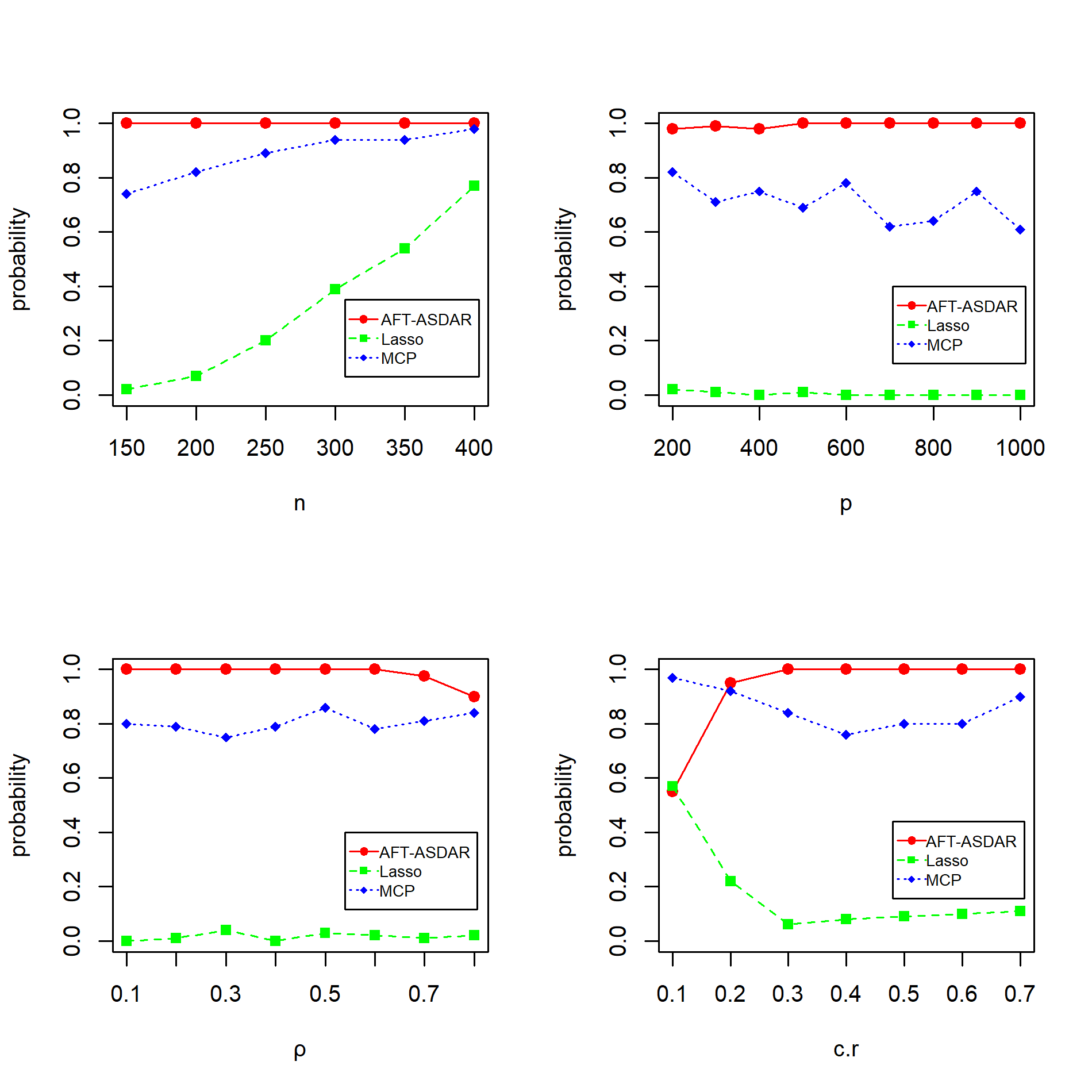}
  \caption{The numerical results of the influence of sample size $n$ (top left panel), variable dimension $p$ (top right panel), correlation $\rho$ (bottom left panel) and censoring rate $c.r$ (bottom right panel) on the probability of exact recovery of the true support sets}
  \label{fig:1}
\end{figure*}
%%%%%%%%%%%%%%%%%%%%%%%%%%%%%%%%
\subsubsection{Influence of the sample size $n$}
\label{sec:9}
We set $n=100:50:400$, $p=550$, $K=6$, $R=10$, $\sigma=1$, $c.r=0.3$ and $\rho=0.3$ in the data generating models.
The top left panel of Fig. \ref{fig:1} shows the influence of the sample size $n$ on the percentage of exact recovery of $A^*$ based
on 100 replications.
In these examples,
AFT-ASDAR tends to have the percentage of recovery close to 100\%, while the percentage of Lasso is significantly less than 100\%, and the percentage of MCP is less than 100\% except when the sample size $n=400$.
\subsubsection{ Influence of the variable dimension $p$}
\label{sec:10}
We set $n=100$, $p=200:100:1000$, $K=6$, $R=10$, $\sigma=1$, $c.r=0.3$ and $\rho=0.3$ in the models.
The top right panel of Fig. \ref{fig:1} shows the influence of the variable dimension  $p$ on the percentage of exact recovery of $A^*$.
The percentage of AFT-ASDAR is always close to 100\% as the variable dimension $p$ increases, but those of both Lasso and MCP are always less than 100\%. These results suggest that AFT-ASDAR performs better in selecting variables with an increasing variable dimension $p$.
\subsubsection{Influence of the correlation $\rho$}
\label{sec:11}
We set $n=150$, $p=500$, $K=6$, $R=10$, $\sigma=1$, $c.r=0.3$ and $\rho=0.1:0.1:0.8$. The bottom left panel of Fig. \ref{fig:1} shows the influence of the the correlation $\rho$ on the percentage of exact recovery of $A^*$.
AFT-ASDAR has nearly 100\% probability in support recovery except when $\rho>0.6$.
When  $\rho=0.8$, the recovery percentage of
AFT-ASDAR is smaller than but still comparable with MCP.
\subsubsection{Influence of the censoring rate $c.r$}
\label{sec:12}
 We set $n=200$, $p=500$, $K=6$, $R=10$, $\sigma=1$, $c.r=0.1:0.1:0.7$
and $\rho=0.3$ to generate the data.
The bottom right panel of Fig. \ref{fig:1} shows the influence of  the censoring rate $c.r$ on the probability of exact recovery of $A^*$.
As the censoring rate $c.r$ increases, the percentage of recovery  of AFT-ASDAR is stable and remains close to 1, while the recovery percentages  of Lasso and MCP are less than 1.
%%%%%%%%%%%%%%%%%%%%%%%%%%%%%%%%%%%%%%%%%%%%%%%%%%
\subsection{Number of iterations}
\label{sec:13}
To examine the convergence properties of
of AFT-SDAR, we conduct  simulations to obtain the average number of iterations of AFT-SDAR with $K$=$T$ in Algorithm \ref{alg:1}.
 We generate the data in the same way as described in
 Section \ref{sec:8}.
Figure \ref{fig:2} shows the average number of iterations of AFT-SDAR \textcolor{black}{with $\tau=1$}
based on 100 independent replications on data set:
$n=500$, $p=1000$, $K= 2 : 2 : 50$, $R=3$, $\sigma=1$, $c.r=0.3$, $\rho=0.3$.
%%%%%%%%%%%%%%%%%%%%%%%%%%%%Figure2
\begin{figure}
  %\centering
  % Requires \usepackage{graphicx}
  %\includegraphics[width=15cm,height=12cm]{Figure1.png}\\
  \resizebox{1\hsize}{!}
  {\includegraphics{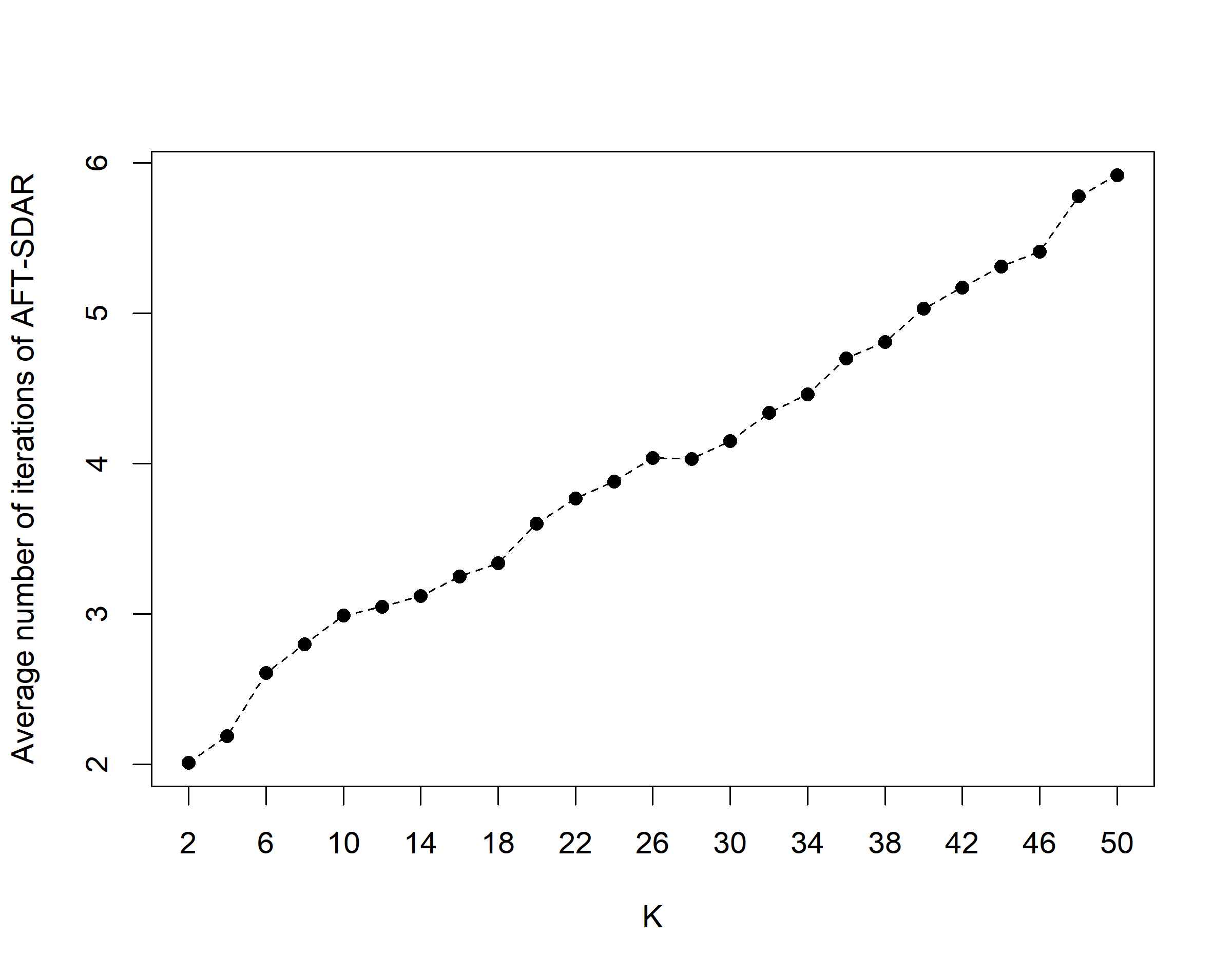}}
  \caption{The average number of iterations of AFT-SDAR as K increases}
  \label{fig:2}
\end{figure}
%%%%%%%%%%%%%%%%%%%%%%%%%%%%%%%%%%%%%%%%%%%%%%5

As shown in  Fig. \ref{fig:2}, the average number of iterations of the AFT-SDAR algorithm increases as the number of important variables $K$ increases from 2 to 50. This is expected since it will take more iterations for the algorithm to converge when the model size increases.
However, even when $K=50$,  it only take six iterations for the algorithm to converge. This shows  that AFT-SDAR has fast convergence in the simulation models considered here.
\subsection{Real data exemple}
\label{sec:14}
In this section, we illustrate the proposed approach by analyzing  the breast cancer data set nki70 from the study of
%Van De Vijver et al. (\citeyear{van2002gene}).
\cite{van2002gene}.
%and this data set has been   analyzed by
%Walschaerts et al. (\citeyear{walschaerts2012stable}).
The nki70 data set includes 144 lymph node positive breast cancer patients on metastasis-free survival, 5 clinical risk factors, and gene expression measurements of 70 genes found to be prognostic for metastasis-free survival in an earlier study, and the censoring rate is about $66.67\%$.
We fit this  data set with the AFT model. Further, we  compare the estimation of the proposed approaches with that of Lasso and MCP. We set $T=0.5*n/\log(n)$ in AFT-SDAR, and implement AFT-ASDAR with $Q=0.5*n/\log(n)$.
Set $\tau=0.01$ in AFT-SDAR and AFT-ASDAR.
The results are showed in Table \ref{tab:2}.

In Table \ref{tab:2}, AFT-SDAR and AFT-ASDAR yield the same results, that is, they select the same set of genes and give the same estimated regression coefficients.
Lasso selects the largest number of genes, and MCP selects  the fewest number of genes.  The  coefficients of the common selected genes for these four methods have same sign. Especially,  AFT-SDAR and AFT-ASDAR yield similar values of the  estimated coefficients to those of Lasso for  genes SLC2A3 and C20orf46, and yield the similar value of the estimated coefficient with MCP for gene MMP9.
%%%%%%%%%%%%%%%%%%%%%%%%%%%%%%%%%%%%%%%%%%%%%%%%
%%%%%%%%%%%%%%%%%%%%%%%%%realdata
\begin{table}
\caption{The estimation results of nki70}
\label{tab:2}
\begin{center}
\scalebox{0.9}{
\begin{tabular}{lllllllll}
\hline\noalign{\smallskip}
Gene name&Number&Lasso&MCP&AFT-SDAR& AFT-ASDAR& \\
\noalign{\smallskip}\hline\noalign{\smallskip}
ALDH4A1&6&-1.37&-&-3.01&-3.01\\
DIAPH3.2&12&-&-&1.50&1.50\\
C16orf61&14&-&-&-1.36&-1.36\\
EXT1&16&1.86&-&3.89&3.89\\
FLT1&17&0.13&-&1.47&1.47\\
GNAZ&18&0.09&-&-&-\\
MMP9&20&-2.57&-3.48&-3.73&-3.73\\
CDC42BPA&27&-&-&1.77&1.77\\
GSTM3&30&-0.72&-&-0.93&-0.93\\
PECI&36&-&-&0.99&0.99\\
MTDH&37&-0.77&-&-1.14&-1.14\\
Contig40831\_RC&38&-0.03&-0.10&-&-\\
SLC2A3&47&1.23&2.48&1.20&1.20\\
RFC4&50&-&-&-1.64&-1.64\\
CDCA7&51&-0.35&-&-&-\\
AP2B1&55&0.26&-&-&-\\
PALM2.AKAP2&62&0.47&-&-&-\\
LGP2&63&0.13&-&0.85&0.85\\
CENPA&66&-0.78&-0.59&-&-\\
C20orf46&70&-0.89&-&-0.85&-0.85\\
\noalign{\smallskip}\hline
\end{tabular}
}
\end{center}
\end{table}
%%%%%%%%%%%%%%%%%%%%%%%%%%%%%%%%%%%%%%%%
\section{Conclusion}
\label{sec:15}
%We extend SDAR algorithm to  the high-dimensional  accelerated failure %time (AFT) model, then we get the AFT-SDAR algorithm.
In this paper, we consider the $\ell_0$-penalized method for estimation and variable selection in the high-dimensional AFT models.
We extend the SDAR algorithm for the linear regression
to the AFT model with censored survival data based on a weighted least squares criterion.
The proposed AFT-SDAR algorithm is a constructive approach for approximating
$\ell_0$-penalized weighted least squares solutions. In theoretical analysis, we establish
$\ell_\infty$  nonasymptotic error bounds for the solution sequence generated by AFT-SDAR algorithm  {under appropriate conditions weaker than  those in the existing works on nonconvex penalized regressions
%( \cite{Zhangzhang}, Huang et al. \citeyear{huang2018constructive}, and \cite{mw2019},
\citep{Zhangzhang,huang2018constructive,mw2019},
and the key condition only relies on the identifiability of the AFT model}.
We also study the oracle support recovery property of AFT-SDAR.
%In practice, the sparsity of the model is unknown, then we construct %the AFT-ASDAR algorithm to handle this question.
Simulation studies and real data analysis demonstrate superior performance of the AFT-SDAR in terms of relative estimation error, support recovery and computational efficiency in comparison with
the lasso and MCP methods.
Therefore, AFT-SDAR can be a useful tool in addition to the existing methods for analyzing high-dimensional censored survival data.

It would be interesting to apply the proposed method to other
important survival analysis models such as the Cox model.
For the Cox model, we can consider $\ell_0$-penalized partial likelihood criterion.
Conceptually, the computational algorithm can be developed similarly based on the idea of support detection and root finding. However, the
theoretical analysis of the convergence properties of the solution
sequence is more challenging if the loss function is not quadratic and requires further work.

%Through the simulations, AFT-SDAR  and AFT-ASDAR algorithms can %effectively get the oracle eatimator, and AFT-SDAR and AFT-ASDAR %algorithms perform better than LASSO and MCP  penalized methods in %computation.

%In this paper, we only propose an approach to estimation and variable %selection in the high-dimensional AFT model
%by extending the SDAR algorithm. To some extent, we can also extend %the SDAR algorithm to other survival model, such as Cox models.
%%%%%%%%%%%%%%%%%%%%%%%%%%%%%%%%%%%%%%%%
%\appendix
%\appendixpage
%\addappheadtotoc
\section*{Appendix}
\begin{appendix}
\section{Proof of Lemma \ref{L1}}
\label{lemma1proof}
\setcounter{equation}{0}
\def\theequation{A.\arabic{equation}}
\begin{proof}
Let $\bar{L}_{\lambda}(\boldeta)=\frac{1}{2n}\|\bar{\bY}-\bar{\bX}\boldeta\|_2^2+\lambda\|\boldeta\|_0$, and
$\widetilde{L}_{\lambda}(\bbeta)=\frac{1}{2n}\sum_{i=1}^{n}w_{(i)}\big{(}Y_{(i)}-\bx^{T}_{(i)}\bbeta\big{)}^2+\lambda\|\bbeta\|_0$.
Suppose $\boldeta^{\diamond}$ is a minimizer of $\bar{L}_{\lambda}$, then
{
\begin{equation*}
\begin{split}
&\eta_{i}^{\diamond}\in \underset{t\in \mathbb{R}}{\mbox{argmin}}~\bar{L}_{\lambda}(\eta^{\diamond}_{1},\ldots,\eta^{\diamond}_{i-1},t,\eta^{\diamond}_{i+1},\ldots,\eta^{\diamond}_{p})\\
\Rightarrow &\eta_{i}^{\diamond}\in \underset{t\in \mathbb{R}}{\mbox{argmin}}~\frac{1}{2n}\left\|\bar{\bX}\boldeta^{\diamond}-\bar{\bY}+(t-\eta^\diamond_{i})\bar{\bx}_{i}\right\|_2^2+\lambda\|t\|_0\\
\Rightarrow &\eta_{i}^{\diamond}\in \underset{t\in \mathbb{R}}{\mbox{argmin}}~\frac{1}{2}(t-\eta^\diamond_{i})^2+(t-\eta^{\diamond}_{i})\bar{\bx}^{T}_{i}(\bar{\bX}\boldeta^{\diamond}-\bar{\bY})/n+\lambda\|t\|_0\\
\Rightarrow &\eta_{i}^{\diamond}\in \underset{t\in \mathbb{R}}{\mbox{argmin}}~\frac{1}{2}[t-(\eta^\diamond_{i}+\bar{\bx}^{T}_{i}(\bar{\bY}-\bar{\bX}\boldeta^{\diamond})/n)]^2+\lambda\|t\|_0.\\
\end{split}
\end{equation*}
}
Let $\bd^{\diamond}=\bar{\bX}^{T}(\bar{\bY}-\bar{\bX}\boldeta^{\diamond})/n$. By the definition of $H_{\lambda}(\cdot)$ in \eqref{eq3}, we have
\begin{equation*}
\eta^{\diamond}_{i}=H_{\lambda}(\eta^{\diamond}_{i}+d^{\diamond}_{i})\quad \quad \mbox{for}\quad i =1,...,p,
\end{equation*}
which shows \eqref{eq2} holds.

Conversely, if $\boldeta^{\diamond}$ and $\bd^{\diamond}$ satisfy \eqref{eq2}, then we will show that $\boldeta^{\diamond}$ is a local minimizer of \eqref{eq1}, and $\bbeta^{\diamond}=D\cdot\boldeta^{\diamond}$ is a local minimizer of \eqref{eq03} too.
We can assume $\bh$ is small enough and $\|\bh\|_{\infty}<\sqrt{2\lambda}$. Then we will show $\bar{L}_{\lambda}(\boldeta^{\diamond}+\bh)\geq \bar{L}_{\lambda}(\boldeta^{\diamond})$ in two case respectively.\\
\textbf{Case1:} $\bh_{I^{\diamond}}\neq0$.
\begin{equation*}
\|\boldeta^{\diamond}+\bh\|_{0}=\|\boldeta_{A^\diamond}^{\diamond}+\bh_{A^\diamond}\|_{0}+\|\bh_{I^\diamond}\|_{0},
\end{equation*}
\begin{equation*}
\lambda\|\boldeta^{\diamond}+\bh\|_{0}-\lambda\|\boldeta^{\diamond}\|_{0}=\lambda\|\boldeta_{A^\diamond}^{\diamond}+\bh_{A^\diamond}\|_{0}+\lambda\|\bh_{I^\diamond}\|_{0}-\lambda\|\boldeta_{A^\diamond}^{\diamond}\|_{0}.
\end{equation*}
Because $|\boldeta^{\diamond}_{i}|\geq\sqrt{2\lambda}$  for  $i\in{A^\diamond}$  and  $\|\bh\|_{\infty}<\sqrt{2\lambda}$, we have \begin{equation*}
\lambda\|\boldeta_{A^\diamond}^{\diamond}+\bh_{A^\diamond}\|_{0}-\lambda\|\boldeta_{A^\diamond}^{\diamond}\|_{0}=0,
\end{equation*}
%%%
\begin{equation*}
\lambda\|\boldeta^{\diamond}+\bh\|_{0}-\lambda\|\boldeta^{\diamond}\|_{0}=\lambda\|\bh_{I^\diamond}\|_{0}>\lambda.
\end{equation*}
Therefore, we get
\begin{equation*}
\begin{split}
&\bar{L}_{\lambda}(\boldeta^{\diamond}+\bh)-\bar{L}_{\lambda}(\boldeta^{\diamond})\\
&=\frac{1}{2n}\left\|\bar{\bY}-\bar{\bX}(\boldeta^{\diamond}+\bh)\right\|_2^2-\frac{1}{2n}\left\|\bar{\bY}-\bar{\bX}\boldeta^{\diamond}\right\|_2^2
+\lambda\|\bh_{I^\diamond}\|_{0}\\
&=\frac{1}{2n}\left[\left\|\bar{\bX}\bh\right\|_2^2-2(\bar{\bY}-\bar{\bX}\boldeta^{\diamond})^{T}\bar{\bX}\bh\right]+\lambda\|\bh_{I^\diamond}\|_{0}\\
&\geq \lambda-\langle \bd^{\diamond},\bh\rangle.
\end{split}
\end{equation*}
The last inequality $\lambda-\langle \bd^{\diamond},\bh\rangle\geq0$ holds for any small enough vector \bh, so we obtain $\bar{L}_{\lambda}(\boldeta^{\diamond}+\bh)-\bar{L}_{\lambda}(\boldeta^{\diamond})\geq0$.\\
\textbf{Case2:} $\bh_{I^{\diamond}}=0$. \begin{equation*}
\lambda\|\boldeta^{\diamond}+\bh\|_{0}-\lambda\|\boldeta^{\diamond}\|_{0}=\lambda\|\boldeta_{A^\diamond}^{\diamond}+\bh_{A^\diamond}\|_{0}-\lambda\|\boldeta_{A^\diamond}^{\diamond}\|_{0}.
\end{equation*}
As $|\eta^{\diamond}_{i}|\geq\sqrt{2\lambda}$  for  $i\in{A^\diamond}$ and $\|\bh_{A^\diamond}\|_{\infty}<\sqrt{2\lambda}$, then we have \begin{equation*}
\lambda\|\boldeta^{\diamond}+\bh\|_{0}-\lambda\|\boldeta^{\diamond}\|_{0}=\lambda\|\boldeta_{A^\diamond}^{\diamond}+\bh_{A^\diamond}\|_{0}-\lambda\|\boldeta_{A^\diamond}^{\diamond}\|_{0}=0.
\end{equation*}
Due to $\bd^{\diamond}_{A^\diamond}=\bar{\bX}^{T}_{A^\diamond}(\bar{\bY}-\bar{\bX}_{A^\diamond}\boldeta^{\diamond}_{A^\diamond})/n=0$, then we can get
\begin{equation*}
\boldeta^{\diamond}_{A^\diamond}\in \underset{\boldeta_{A^\diamond}}{\mbox{argmin}}~\frac{1}{2n}\left\|\bar{\bX}_{A^\diamond}\boldeta_{A^\diamond}-\bar{\bY}\right\|_2^2.
\end{equation*}
Thus, we conclude that
\begin{equation*}
\begin{split}
&\bar{L}_{\lambda}(\boldeta^\diamond+\bh)-\bar{L}_{\lambda}(\boldeta^\diamond)\\
&=\frac{1}{2n}\left\|\bar{\bY}-\bar{\bX}(\boldeta^{\diamond}+\bh)\right\|_2^2-\frac{1}{2n}\left\|\bar{\bY}-\bar{\bX}\boldeta^{\diamond}\right\|_2^2\\
&=\frac{1}{2n}\left\|\bar{\bY}-\bar{\bX}_{A^\diamond}(\boldeta^{\diamond}_{A^\diamond}+\bh_{A^\diamond})\right\|_2^2-
\frac{1}{2n}\left\|\bar{\bY}-\bar{\bX}_{A^\diamond}\boldeta^{\diamond}_{A^\diamond}\right\|_2^2\\
&\geq0.
\end{split}
\end{equation*}
In summary, $\boldeta^{\diamond}$ is a local minimizer of $\bar{L}_{\lambda}$.
Let $\bar{\bh}=D\cdot \bh$,
then $\widetilde{L}_{\lambda}(\bbeta^\diamond+\bar{\bh})-\widetilde{L}_{\lambda}(\bbeta^\diamond)\geq0$ holds if the vector $\bh$  is sufficiently small, thus $\bbeta^\diamond$ is also a local
minimizer of \eqref{eq03}.
\qed
\end{proof}
%%%%%%%%%%%%%%%%%%%%%%%%%%%%%%%%%%%%%%%%%%%%%%%
\begin{lemma}\label{L9.0}
There exists  constants $0<L\leq U<\infty$ with $0<L\leq\frac{\sigma_{(\min,2T)}}{n\sqrt{2T}}$ and $\frac{\|\bar{\bX}\|_2^2}{n}\leq U<\infty$ such that for all different p-dimensional vectors $\boldeta_{1}$ and $\boldeta_{2}$ with $\|\boldeta_{1}-\boldeta_{2}\|_{0}\leq 2T$,
\begin{equation}\label{UL}
0<L\leq\frac{(\boldeta_{1}-\boldeta_{2})^{T}\cdot\bar{\bX}^T\bar{\bX}\cdot(\boldeta_{1}-\boldeta_{2})}{n\|\boldeta_{1}-\boldeta_{2}\|_1\|\boldeta_{1}-\boldeta_{2}\|_{\infty}}\leq U<\infty.
\end{equation}
%where $\sigma_{\{\min,2T\}}(\bar{\bX}_{2T}^T\bar{\bX}_{2T})=\min\{\sigma_{\min}(\bar{\bX}_{2T}^T\bar{\bX}_{2T}):\bar{\bX}_{2T}\in \mathbb{R}^{n\times 2T}\}$.
\end{lemma}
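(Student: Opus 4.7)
The plan is to set $\bv = \boldeta_1 - \boldeta_2$, so that by hypothesis $|A| \le 2T$ where $A = \mathrm{supp}(\bv)$, and then bound the ratio
\[
R(\bv) \;=\; \frac{\bv^{T}\bar{\bX}^{T}\bar{\bX}\bv}{n\,\|\bv\|_1\|\bv\|_\infty} \;=\; \frac{\|\bar{\bX}\bv\|_2^2}{n\,\|\bv\|_1\|\bv\|_\infty}
\]
from above and below by the constants claimed. Both bounds reduce to an operator-norm step combined with a sparse-vector norm comparison.

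For the upper bound I would first use the operator-norm inequality $\|\bar{\bX}\bv\|_2^2 \le \|\bar{\bX}\|_2^2\,\|\bv\|_2^2$, and then the elementary identity $\|\bv\|_2^2 = \sum_i v_i^2 \le \|\bv\|_\infty \sum_i|v_i| = \|\bv\|_\infty\|\bv\|_1$, which holds for any vector (no sparsity needed). This yields $R(\bv) \le \|\bar{\bX}\|_2^2/n$, so any $U \ge \|\bar{\bX}\|_2^2/n$ works for the right-hand inequality.

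For the lower bound, observe that since $\bv$ is supported on $A$ with $|A|\le 2T$, $\bv^{T}\bar{\bX}^{T}\bar{\bX}\bv = \bv_A^{T}\bar{\bX}_A^{T}\bar{\bX}_A\bv_A \ge \sigma_{\min}(\bar{\bX}_A^{T}\bar{\bX}_A)\,\|\bv\|_2^2$. Embedding $A$ in any $A'\supseteq A$ with $|A'|=2T$ and invoking Cauchy interlacing on the principal submatrix $\bar{\bX}_A^{T}\bar{\bX}_A$ of $\bar{\bX}_{A'}^{T}\bar{\bX}_{A'}$ gives $\sigma_{\min}(\bar{\bX}_A^{T}\bar{\bX}_A) \ge \sigma_{\min}(\bar{\bX}_{A'}^{T}\bar{\bX}_{A'}) \ge \sigma_{(\min,2T)}$, so
\[
\bv^{T}\bar{\bX}^{T}\bar{\bX}\bv \;\ge\; \sigma_{(\min,2T)}\,\|\bv\|_2^2.
\]
To convert $\|\bv\|_2^2$ into the product $\|\bv\|_1\|\bv\|_\infty$, I would apply Cauchy--Schwarz on the support of $\bv$ to get $\|\bv\|_1 \le \sqrt{|A|}\,\|\bv\|_2 \le \sqrt{2T}\,\|\bv\|_2$, while trivially $\|\bv\|_2 \ge \|\bv\|_\infty$; multiplying gives $\|\bv\|_2^2 \ge \|\bv\|_1\|\bv\|_\infty/\sqrt{2T}$. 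Combining, $R(\bv) \ge \sigma_{(\min,2T)}/(n\sqrt{2T})$, which is exactly the upper bound allowed for $L$.

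Positivity of $L$ is precisely the identifiability condition $\sigma_{(\min,2T)} > 0$ flagged in the paper's discussion after Theorem \ref{th1b}, and $L \le U$ follows automatically by applying the two-sided inequality to any nonzero sparse $\bv$. There is no genuine obstacle here; the only point that deserves care is the lower-bound step when $|\mathrm{supp}(\bv)| < 2T$, which is handled cleanly by the Cauchy interlacing remark above rather than by restricting to vectors with support of size exactly $2T$.
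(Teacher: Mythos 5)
Your proposal is correct and follows essentially the same route as the paper's proof: bound the ratio above via $\|\bv\|_1\|\bv\|_\infty \geq \|\bv\|_2^2$ together with the spectral norm of $\bar{\bX}$, and below via $\|\bv\|_1\|\bv\|_\infty \leq \sqrt{2T}\,\|\bv\|_2^2$ together with $\sigma_{(\min,2T)}$. The only difference is that you spell out the Cauchy--Schwarz and eigenvalue-interlacing details that the paper leaves implicit, which is a welcome clarification but not a different argument.
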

\begin{proof}
Since $\|\boldeta_{1}-\boldeta_{2}\|_1\|\boldeta_{1}-\boldeta_{2}\|_{\infty}\geq\|\boldeta_{1}-\boldeta_{2}\|_2^2$,
we have
\begin{equation*}
\frac{(\boldeta_{1}-\boldeta_{2})^{T}\cdot\bar{\bX}^T\bar{\bX}\cdot(\boldeta_{1}-\boldeta_{2})}{n\|\boldeta_{1}-\boldeta_{2}\|_1\|\boldeta_{1}-\boldeta_{2}\|_{\infty}}\leq\frac{\|\bar{\bX}\|^2_2}{n}.
\end{equation*}
Hence, there exist $U\in \left[\frac{\|\bar{\bX}\|_2^2}{n},\infty\right)$ such that the right hand side of \eqref{UL} holds.
Moreover, since $\|\boldeta_{1}-\boldeta_{2}\|_1\|\boldeta_{1}-\boldeta_{2}\|_{\infty}\leq\sqrt{2T}\|\boldeta_{1}-\boldeta_{2}\|_2^2$,
we have
\begin{equation*}
\frac{(\boldeta_{1}-\boldeta_{2})^{T}\cdot\bar{\bX}^T\bar{\bX}\cdot(\boldeta_{1}-\boldeta_{2})}{n\|\boldeta_{1}-\boldeta_{2}\|_1\|\boldeta_{1}-\boldeta_{2}\|_{\infty}}\geq\frac{\sigma_{(\min,2T)}}{n\sqrt{2T}}.
\end{equation*}
Thus, there exists $L \in \left(0,\frac{\sigma_{(\min,2T)}}{n\sqrt{2T}}\right]$
such that the left hand side of \eqref{UL} holds.
\qed
\end{proof}
%%%%%%%%%%%%%%%%%%%%%%%%%%%%%%
\begin{lemma}\label{L9.1}
Assume $0<L\leq\frac{\sigma_{(\min,2T)}}{n\sqrt{2T}}$ and $\|\boldeta^{*}\|_{0}=K\leq T$. Denote $B^k = A^{k}\backslash A^{k-1}$. Then,
\begin{equation*}
\|\nabla_{B^k}\mathcal{L}_2(\boldeta^{k})\|_1\|\nabla_{B^k}\mathcal{L}_2(\boldeta^{k})\|_{\infty}\geq 2L\zeta[\mathcal{L}_2(\boldeta^k)-\mathcal{L}_2(\boldeta^*)],
\end{equation*}
where $\zeta=\frac{|B^k|}{|B^k|+|A^*\backslash A^{k-1}|}$.
\end{lemma}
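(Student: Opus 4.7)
\textbf{Proof plan for Lemma \ref{L9.1}.}

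My plan is to compare $\boldeta^k$ to the unconstrained minimizer of $\mathcal{L}_2$ over the support $A^{k-1}\cup A^*$, and then transfer the resulting gradient inequality from $B^*_k:=A^*\setminus A^{k-1}$ to $B^k$ using the selection rule that defines $A^k$. Let $g:=\nabla\mathcal{L}_2(\boldeta^k)$ and $H:=\bar{\bX}^T\bar{\bX}/n$. The first observation is that $g_{A^{k-1}}=\mathbf{0}$, because $\boldeta^k$ is, by \eqref{eq5}, the OLS solution on $A^{k-1}$; equivalently $\bd^k_{A^{k-1}}=\mathbf{0}$ and on $I^{k-1}$ one has $|g_i|=|d^k_i|$. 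Define $\widehat{\boldeta}$ to be the minimizer of $\mathcal{L}_2$ among vectors supported in $S:=A^{k-1}\cup A^*$. Since $|S|\le T+K\le 2T$, Lemma \ref{L9.0} guarantees $\bar{\bX}_S^T\bar{\bX}_S$ is positive definite, so $\widehat{\boldeta}$ is well-defined and $\mathcal{L}_2(\widehat{\boldeta})\le \mathcal{L}_2(\boldeta^*)$.

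Next I exploit quadraticity. Writing the exact second-order expansion around $\boldeta^k$ and using the normal-equation identity $\nabla_S\mathcal{L}_2(\widehat{\boldeta})=0$, i.e.\ $g_S=-(H(\widehat{\boldeta}-\boldeta^k))_S$, together with $\mathrm{supp}(\boldeta^k-\widehat{\boldeta})\subseteq S$, gives
\[
(\boldeta^k-\widehat{\boldeta})^T H(\boldeta^k-\widehat{\boldeta})=\langle g,\boldeta^k-\widehat{\boldeta}\rangle,\qquad \mathcal{L}_2(\boldeta^k)-\mathcal{L}_2(\widehat{\boldeta})=\tfrac{1}{2}\langle g,\boldeta^k-\widehat{\boldeta}\rangle.
\]
Since $g_{A^{k-1}}=\mathbf{0}$, the inner product reduces to $\langle g_{B^*_k},(\boldeta^k-\widehat{\boldeta})_{B^*_k}\rangle$. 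Applying Lemma \ref{L9.0} to the lower bound $(\boldeta^k-\widehat{\boldeta})^T H(\boldeta^k-\widehat{\boldeta})\ge L\|\boldeta^k-\widehat{\boldeta}\|_1\|\boldeta^k-\widehat{\boldeta}\|_\infty$ and comparing with the Hölder upper bound $\langle g_{B^*_k},(\boldeta^k-\widehat{\boldeta})_{B^*_k}\rangle\le \|g_{B^*_k}\|_1\|\boldeta^k-\widehat{\boldeta}\|_\infty$, I obtain $L\|\boldeta^k-\widehat{\boldeta}\|_1\le\|g_{B^*_k}\|_1$. Substituting back into the other Hölder bound yields
\[
2L\big[\mathcal{L}_2(\boldeta^k)-\mathcal{L}_2(\boldeta^*)\big]\;\le\;2L\big[\mathcal{L}_2(\boldeta^k)-\mathcal{L}_2(\widehat{\boldeta})\big]\;\le\;\|g_{B^*_k}\|_1\,\|g_{B^*_k}\|_\infty.
\]

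The main obstacle, and the only genuinely combinatorial step, is to pass from $B^*_k$ to $B^k$ with exactly the factor $\zeta=|B^k|/(|B^k|+|B^*_k|)$. For this I split $B^*_k=I_1\cup I_2$, with $I_1:=B^*_k\cap A^k\subseteq B^k$ and $I_2:=B^*_k\setminus A^k\subseteq I^{k-1}\setminus A^k$. On $I^{k-1}$, the selection rule \eqref{eq4b} together with $\boldeta^k_{I^{k-1}}=0$ says $\tau|d^k_i|\ge\sqrt{2\lambda}$ on $B^k$ and $\tau|d^k_j|<\sqrt{2\lambda}$ on $I^{k-1}\setminus A^k$, so $\min_{i\in B^k}|g_i|>\max_{j\in I_2}|g_j|$. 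This immediately gives $\|g_{I_1}\|_\infty,\|g_{I_2}\|_\infty\le\|g_{B^k}\|_\infty$, and using $\min_{i\in B^k}|g_i|\le\|g_{B^k}\|_1/|B^k|$ yields $\|g_{I_2}\|_1\le|I_2|\,\|g_{B^k}\|_1/|B^k|$. Summing,
\[
\|g_{B^*_k}\|_\infty\le\|g_{B^k}\|_\infty,\qquad \|g_{B^*_k}\|_1\le\frac{|B^k|+|I_2|}{|B^k|}\|g_{B^k}\|_1\le\frac{|B^k|+|B^*_k|}{|B^k|}\|g_{B^k}\|_1=\frac{1}{\zeta}\|g_{B^k}\|_1.
\]
Multiplying the two bounds and combining with the gap inequality above gives $\|g_{B^k}\|_1\|g_{B^k}\|_\infty\ge\zeta\|g_{B^*_k}\|_1\|g_{B^*_k}\|_\infty\ge 2L\zeta[\mathcal{L}_2(\boldeta^k)-\mathcal{L}_2(\boldeta^*)]$, which is the claimed inequality. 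The edge case $B^k=\emptyset$ is vacuous since both sides are then zero.
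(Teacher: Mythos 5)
Your proof is correct, and it takes a genuinely different route from the paper's. The paper compares $\boldeta^k$ directly with $\boldeta^*$: it applies the lower bound of Lemma \ref{L9.0} to $\boldeta^*-\boldeta^k$, uses $\langle\nabla\mathcal{L}_2(\boldeta^k),\boldeta^k\rangle=0$ to reduce everything to $\langle\nabla\mathcal{L}_2(\boldeta^k),-\boldeta^*\rangle$, lower-bounds this by $\sqrt{2L}\,\sqrt{\|\boldeta^*-\boldeta^k\|_1\|\boldeta^*-\boldeta^k\|_\infty}\,\sqrt{\mathcal{L}_2(\boldeta^k)-\mathcal{L}_2(\boldeta^*)}$ via an AM--GM step, and upper-bounds it by Cauchy--Schwarz in $\ell_2$, absorbing the passage from $A^*\setminus A^{k-1}$ to $B^k$ into a single factor $1/\sqrt{\zeta}$ on $\|\nabla_{B^k}\mathcal{L}_2(\boldeta^k)\|_2$. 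You instead introduce the auxiliary restricted least-squares solution $\widehat{\boldeta}$ on $A^{k-1}\cup A^*$, exploit the exact quadratic identities to obtain $2L[\mathcal{L}_2(\boldeta^k)-\mathcal{L}_2(\boldeta^*)]\le\|\nabla_{B^*_k}\mathcal{L}_2(\boldeta^k)\|_1\|\nabla_{B^*_k}\mathcal{L}_2(\boldeta^k)\|_\infty$ through two H\"older applications, and only then transfer from $B^*_k$ to $B^k$, placing the factor $1/\zeta$ entirely on the $\ell_1$ norm. Both arguments rest on the same three facts --- the $\ell_1\times\ell_\infty$ restricted convexity of Lemma \ref{L9.0}, the vanishing of the gradient on $A^{k-1}$, and the dominance of the $B^k$ coordinates over $I^{k-1}\setminus A^k$ under the selection rule \eqref{eq4b} --- but your version avoids the $\ell_2$/AM--GM detour and makes the origin of $\zeta$ more transparent, at the cost of introducing $\widehat{\boldeta}$, which the paper does not need because the one-sided convexity inequality evaluated at $\boldeta^*$ already suffices. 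Two small points you should make explicit: the division by $\|\boldeta^k-\widehat{\boldeta}\|_\infty$ requires noting that the case $\widehat{\boldeta}=\boldeta^k$ is trivial (the loss gap is then nonpositive), and in the edge case $B^k=\emptyset$ with $A^*\subseteq A^{k-1}$ the right-hand side is nonpositive because $\boldeta^k$ is the least-squares fit on a superset of $A^*$, so the claim still holds.
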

\begin{proof}
Obviously, this lemma holds if $A^{k}=A^{k-1}$ or $\mathcal{L}_2(\boldeta^k)\leq \mathcal{L}_2(\boldeta^*)$. So, we only prove the lemma by assuming $A^{k}\neq A^{k-1}$ and $\mathcal{L}_2(\boldeta^k)>\mathcal{L}_2(\boldeta^*)$. As $0<L\leq\frac{\sigma_{(\min,2T)}}{n\sqrt{2T}}$,
 the left hand side of  \eqref{UL} holds. It implies that
\begin{equation*}
\begin{split}
&\mathcal{L}_2(\boldeta^{*})-\mathcal{L}_2(\boldeta^k)-\langle\nabla \mathcal{L}_2(\boldeta^k),{\boldeta^*-\boldeta^k}\rangle\geq\frac{L}{2}\left\|\boldeta^*-\boldeta^k\right\|_1\left\|\boldeta^*-\boldeta^k\right\|_{\infty}.
\end{split}
\end{equation*}
Hence,
\begin{equation*}
\begin{split}
&-\langle\nabla \mathcal{L}_2(\boldeta^k),{\boldeta^*-\boldeta^k}\rangle\\
&=\langle\nabla \mathcal{L}_2(\boldeta^k),-\boldeta^*\rangle\\
&\geq\frac{L}{2}\left\|\boldeta^*-\boldeta^k\right\|_1\left\|\boldeta^*-\boldeta^k\right\|_{\infty}+\mathcal{L}_2(\boldeta^k)-\mathcal{L}_2(\boldeta^{*})\\
&\geq \sqrt{2L}\sqrt{\left\|\boldeta^*-\boldeta^k\right\|_1\left\|\boldeta^*-\boldeta^k\right\|_{\infty}}\sqrt{\mathcal{L}_2(\boldeta^k)-\mathcal{L}_2(\boldeta^{*})}.\\
\end{split}
\end{equation*}
By the definition of $A^{k}$,  $B^k$ contains the first $|B^k|$-largest elements (in absolute value) of $\nabla \mathcal{L}_2(\boldeta^k)$ and
\begin{equation*}
\mbox{supp}(\nabla \mathcal{L}_2(\boldeta^k))\bigcap \mbox{supp}(\boldeta^*)=A^{*}\backslash A^{k-1}.
\end{equation*}
 Thus,  we get
\begin{equation*}
\begin{split}
&\langle\nabla \mathcal{L}_2(\boldeta^k),-\boldeta^*\rangle\\
&\leq\frac{1}{\sqrt{\zeta}}\|\nabla_{B^k}\mathcal{L}_2(\boldeta^k)\|_2\|\boldeta_{A^*\backslash A^{k-1}}^{*}\|_2\\
&=\frac{1}{\sqrt{\zeta}}\|\nabla_{B^k}\mathcal{L}_2(\boldeta^k)\|_2\|(\boldeta^{*}-\boldeta^k)_{A^*\backslash A^{k-1}}\|_2\\
&\leq\frac{1}{\sqrt{\zeta}}\|\nabla_{B^k}\mathcal{L}_2(\boldeta^k)\|_2\|\boldeta^{*}-\boldeta^k\|_2\\
&\leq\frac{1}{\sqrt{\zeta}}\sqrt{\|\nabla_{B^k}\mathcal{L}_2(\boldeta^k)\|_1\|\nabla_{B^k}\mathcal{L}_2(\boldeta^k)\|_{\infty}}
\cdot\sqrt{\|\boldeta^{*}-\boldeta^k\|_1\|\boldeta^{*}-\boldeta^k\|_{\infty}}.
\end{split}
\end{equation*}
Therefore,
{
\begin{equation*}
\sqrt{2L}\sqrt{\mathcal{L}_2(\boldeta^k)-\mathcal{L}_2(\boldeta^{*})}\leq\frac{1}{\sqrt{\zeta}}\sqrt{\|\nabla_{B^k}\mathcal{L}_2(\boldeta^k)\|_1\|\nabla_{B^k}\mathcal{L}_2(\boldeta^k)\|_{\infty}}.
\end{equation*}
}
In summary,
\begin{equation*}
\|\nabla_{B^k}\mathcal{L}_2(\boldeta^k)\|_1\|\nabla_{B^k}\mathcal{L}_2(\boldeta^k)\|_{\infty}\geq 2L\zeta[\mathcal{L}_2(\boldeta^k)-\mathcal{L}_2(\boldeta^*)].
\end{equation*}
\qed
\end{proof}
%%%%%%%%%%%%%%%%%%%%%%%%%%%%%%%%%%%%%%%%%%%%%%%%%%%%
\begin{lemma}\label{L9.2}
Assume $\tau<\frac{1}{ \sqrt{T}U}$  with $\frac{\|\bar{\bX}\|_2^2}{n}\leq U<\infty$
and $0<L\leq\frac{\sigma_{(\min,2T)}}{n\sqrt{2T}}$,
and set $K\leq T$ in Algorithm \ref{alg:1}.
 Then before Algorithm \ref{alg:1} terminates, the following inequality holds for all $k\geq0$:
\begin{equation*}
\mathcal{L}_2(\boldeta^{k+1})-\mathcal{L}_2(\boldeta^*)\leq\xi[\mathcal{L}_2(\boldeta^k)-\mathcal{L}_2(\boldeta^*)],
\end{equation*}
where $\xi=1-\frac{2\tau L(1-\tau \sqrt{T}U)}{ \sqrt{T}(1+K)}\in(0,1)$.
\end{lemma}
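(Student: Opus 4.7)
The strategy is to construct an explicit test vector $\tilde{\boldeta}$ supported on $A^k$, invoke the minimizer property $\mathcal{L}_2(\boldeta^{k+1})\le \mathcal{L}_2(\tilde{\boldeta})$, expand the resulting quadratic, and reduce the descent bound to one on $\|\nabla_{B^k}\mathcal{L}_2(\boldeta^k)\|_1\|\nabla_{B^k}\mathcal{L}_2(\boldeta^k)\|_\infty$, which Lemma \ref{L9.1} converts into the loss gap $\mathcal{L}_2(\boldeta^k)-\mathcal{L}_2(\boldeta^*)$. Here $B^k=A^k\setminus A^{k-1}$, and pre-termination we have $|B^k|=|A^{k-1}\setminus A^k|\ge 1$ because $|A^k|=|A^{k-1}|=T$.

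First I would take $\tilde{\boldeta}$ to equal $\boldeta^k$ on $A^{k-1}\cap A^k$, to equal $-\tau\nabla_i\mathcal{L}_2(\boldeta^k)$ on $B^k$, and to vanish elsewhere. Set $\Delta=\tilde{\boldeta}-\boldeta^k$: its support lies in $B^k\cup(A^{k-1}\setminus A^k)$ and has cardinality at most $2T$, so Lemma \ref{L9.0} applies. Exploiting the orthogonality $\nabla_{A^{k-1}}\mathcal{L}_2(\boldeta^k)=0$ built into the algorithm, the linear part of the quadratic expansion of $\mathcal{L}_2$ at $\boldeta^k$ collapses to $\langle\nabla\mathcal{L}_2(\boldeta^k),\Delta\rangle=-\tau\|\nabla_{B^k}\mathcal{L}_2(\boldeta^k)\|_2^2$, while the quadratic part is bounded by $\frac{U}{2}\|\Delta\|_1\|\Delta\|_\infty$ via Lemma \ref{L9.0}.

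The crux of the argument, which I expect to be the main obstacle, is to control the ``discarded mass'' $\boldeta^k_{A^{k-1}\setminus A^k}$ appearing in $\Delta$ purely in terms of $\nabla_{B^k}\mathcal{L}_2(\boldeta^k)$. The key observation is the top-$T$ selection rule: since $\bd^k$ vanishes on $A^{k-1}$ and $\boldeta^k$ vanishes on $I^{k-1}\supseteq B^k$, for every $i\in A^{k-1}\setminus A^k$ and $j\in B^k$ one has $|\eta^k_i|\le \tau|d^k_j|=\tau|\nabla_j\mathcal{L}_2(\boldeta^k)|$. Taking the maximum over $j$ gives $\|\boldeta^k_{A^{k-1}\setminus A^k}\|_\infty\le \tau\|\nabla_{B^k}\mathcal{L}_2(\boldeta^k)\|_\infty$, and using $|A^{k-1}\setminus A^k|=|B^k|$ together with the elementary bound that cardinality times minimum is at most the sum yields $\|\boldeta^k_{A^{k-1}\setminus A^k}\|_1\le \tau\|\nabla_{B^k}\mathcal{L}_2(\boldeta^k)\|_1$. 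Hence $\|\Delta\|_\infty\le \tau\|\nabla_{B^k}\mathcal{L}_2(\boldeta^k)\|_\infty$ and $\|\Delta\|_1\le 2\tau\|\nabla_{B^k}\mathcal{L}_2(\boldeta^k)\|_1$.

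Combining these with the elementary inequality $\|v\|_2^2\ge \|v\|_1\|v\|_\infty/\sqrt{|\mathrm{supp}(v)|}$ applied to $v=\nabla_{B^k}\mathcal{L}_2(\boldeta^k)$, whose support has size at most $T$, I will arrive at
\begin{equation*}
\mathcal{L}_2(\boldeta^{k+1})-\mathcal{L}_2(\boldeta^k)\le -\frac{\tau(1-\tau\sqrt{T}U)}{\sqrt{T}}\|\nabla_{B^k}\mathcal{L}_2(\boldeta^k)\|_1\|\nabla_{B^k}\mathcal{L}_2(\boldeta^k)\|_\infty.
\end{equation*}
Plugging in Lemma \ref{L9.1} and using $\zeta=|B^k|/(|B^k|+|A^*\setminus A^{k-1}|)\ge 1/(1+K)$ (valid pre-termination because $|B^k|\ge 1$ and $|A^*\setminus A^{k-1}|\le K$), then adding $\mathcal{L}_2(\boldeta^k)-\mathcal{L}_2(\boldeta^*)$ to both sides, yields the advertised contraction with rate $\xi$. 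That $\xi\in(0,1)$ follows from $\tau<1/(\sqrt{T}U)$ together with the implicit bound $L\le U/\sqrt{2T}$, which is a short consequence of $\sigma_{(\min,2T)}\le\|\bar{\bX}\|_2^2$ combined with $U\ge\|\bar{\bX}\|_2^2/n$.
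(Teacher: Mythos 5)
Your proposal is correct and follows essentially the same route as the paper: your test vector $\tilde{\boldeta}$ is exactly the paper's $\bb^k|_{A^k}$ with $\bb^k=\boldeta^k-\tau\nabla\mathcal{L}_2(\boldeta^k)$, and the remaining steps (quadratic expansion via Lemma \ref{L9.0}, controlling the discarded coordinates on $A^{k-1}\setminus A^k$ by the top-$T$ selection rule, then invoking Lemma \ref{L9.1} with $\zeta\ge 1/(1+K)$) coincide with the paper's argument. The only difference is presentational — you work directly at iteration $k$ where the paper derives the descent inequality at $k+1$ and shifts the index.
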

\begin{proof}
Let $\bb^{k}=\boldeta^k-\tau \nabla \mathcal{L}_2(\boldeta^{k})$.
The right hand side of  \eqref{UL} implies
\begin{equation*}
\begin{split}
&\mathcal{L}_2(\bb^{k+1}|_{A^{k+1}})-\mathcal{L}_2(\boldeta^{k+1})\\
&\leq\langle\nabla \mathcal{L}_2(\boldeta^{k+1}),\bb^{k+1}|_{A^{k+1}}-\boldeta^{k+1}\rangle+\frac{U}{2}\left\|\bb^{k+1}|_{A^{k+1}}-\boldeta^{k+1}\right\|_1\left\|\bb^{k+1}|_{A^{k+1}}-\boldeta^{k+1}\right\|_{\infty}.
\end{split}
\end{equation*}
On one hand,
\begin{equation*}
\begin{split}
&\langle\nabla \mathcal{L}_2(\boldeta^{k+1}), \bb^{k+1}|_{A^{k+1}}-\boldeta^{k+1}\rangle\\
&=\langle\nabla \mathcal{L}_2(\boldeta^{k+1}), \bb^{k+1}|_{A^{k+1}}\rangle\\
&=\langle\nabla_{A^{k+1}}\mathcal{L}_2(\boldeta^{k+1}), \bb_{A^{k+1}}^{k+1}\rangle\\
&=\langle\nabla_{A^{k+1}\backslash A^{k}}\mathcal{L}_2(\boldeta^{k+1}), \bb_{A^{k+1}\backslash A^{k}}^{k+1}\rangle.
\end{split}
\end{equation*}
%%%%%%%
Furthermore, we also have
\begin{equation*}
\begin{split}
&\big{\|}\bb^{k+1}|_{A^{k+1}}-\boldeta^{k+1}\big{\|}_1\\
&=\big{\|}\bb^{k+1}|_{A^{k+1}\backslash A^{k}}+\bb^{k+1}|_{A^{k+1}\bigcap A^{k}}-\boldeta^{k+1}|_{A^{k+1}\bigcap A^{k}}-\boldeta^{k+1}|_{A^{k}\backslash A^{k+1}}\big{\|}_1\\
&=\big{\|}\bb_{A^{k+1}\backslash A^{k}}^{k+1}\|_1+\|\bb_{A^{k+1}\bigcap A^{k}}^{k+1}-\boldeta_{A^{k+1}\bigcap A^{k}}^{k+1}\|_1+\|\boldeta_{A^{k}\backslash A^{k+1}}^{k+1}\big{\|}_1\\
&=\big{\|}\bb_{A^{k+1}\backslash A^{k}}^{k+1}\|_1+\|\boldeta_{A^{k}\backslash A^{k+1}}^{k+1}\big{\|}_1,
\end{split}
\end{equation*}
and
\begin{equation*}
\begin{split}
&\big{\|}\bb^{k+1}|_{A^{k+1}}-\boldeta^{k+1}\big{\|}_{\infty}\\
&=\big{\|}\bb^{k+1}|_{A^{k+1}\backslash A^{k}}+\bb^{k+1}|_{A^{k+1}\bigcap A^{k}}-\boldeta^{k+1}|_{A^{k+1}\bigcap A^{k}}-\boldeta^{k+1}|_{A^{k}\backslash A^{k+1}}\big{\|}_{\infty}\\
&=\big{\|}\bb_{A^{k+1}\backslash A^{k}}^{k+1}\|_{\infty}\bigvee\|\boldeta_{A^{k}\backslash A^{k+1}}^{k+1}\big{\|}_{\infty},
\end{split}
\end{equation*}
where $c\bigvee d=\max\left\{c,d\right\}$.
On the other hand, by the definition of $A^k$, $A^{k+1}$ and $\boldeta^{k+1}$, we know that
\begin{equation*}
|A^{k}\backslash A^{k+1}|=|A^{k+1}\backslash A^{k}|,~ \bb_{A^{k}\backslash A^{k+1}}^{k+1}=\boldeta_{A^{k}\backslash A^{k+1}}^{k+1}.
\end{equation*}
By the definition of $A^{k+1}$, we conclude that
\begin{equation*}
\big{\|}\bb_{A^{k}\backslash A^{k+1}}^{k+1}\big{\|}_1=\big{\|}\boldeta_{A^{k}\backslash A^{k+1}}^{k+1}\big{\|}_1\leq \big{\|}\bb_{A^{k+1}\backslash A^{k}}^{k+1}\big{\|}_1,
\end{equation*}
%%%
\begin{equation*}
\big{\|}\bb_{A^{k+1}\backslash A^{k}}^{k+1}\|_{\infty}\bigvee\|\boldeta_{A^{k}\backslash A^{k+1}}^{k+1}\big{\|}_{\infty}=\big{\|}\bb_{A^{k+1}\backslash A^{k}}^{k+1}\|_{\infty}.
\end{equation*}
Due to $-\nabla_{A^{k+1}\backslash A^{k}}\mathcal{L}_2(\boldeta^{k+1})=\frac{1}{\tau}\bb_{A^{k+1}\backslash A^{k}}^{k+1}$ and $\tau<\frac{1}{ \sqrt{T}U}$, hence we have
\begin{equation*}
\begin{split}
&\mathcal{L}_2(\bb^{k+1}|_{A^{k+1}})-\mathcal{L}_2(\boldeta^{k+1})\\
&\leq\langle\nabla_{A^{k+1}\backslash A^{k}}\mathcal{L}_2(\boldeta^{k+1}),\bb_{A^{k+1}\backslash A^{k}}^{k+1}\rangle+U\big{\|}\bb_{A^{k+1}\backslash A^{k}}^{k+1}\big{\|}_1\big{\|}\bb_{A^{k+1}\backslash A^{k}}^{k+1}\big{\|}_{\infty}\\
&\leq-\left(\frac{\tau}{ \sqrt{T}}-U\tau^2\right)\big{\|}\nabla_{A^{k+1}\backslash A^{k}}\mathcal{L}_2(\boldeta^{k+1})\big{\|}_1\cdot\big{\|}\nabla_{A^{k+1}\backslash A^{k}}\mathcal{L}_2(\boldeta^{k+1})\big{\|}_{\infty}.
\end{split}
\end{equation*}
By the definition of $\boldeta^{k+1}$, we  get
\begin{equation*}
\begin{split}
&\mathcal{L}_2(\boldeta^{k+1})-\mathcal{L}_2(\boldeta^{k})\\
&\leq \mathcal{L}_2(\bb^{k}|_{A^{k}})-\mathcal{L}_2(\boldeta^{k})\\
&\leq-\left(\frac{\tau}{ \sqrt{T}}-U\tau^2\right)\big{\|}\nabla_{B^k}\mathcal{L}_2(\boldeta^{k})\big{\|}_1\big{\|}\nabla_{B^k}\mathcal{L}_2(\boldeta^{k})\big{\|}_{\infty}.
\end{split}
\end{equation*}
Moreover, $\frac{|A^*\backslash A^{k-1}|}{|B^{k}|}\leq K$.
By Lemma $\ref{L9.1}$, we have
\begin{equation*}
\mathcal{L}_2(\boldeta^{k+1})-\mathcal{L}_2(\boldeta^{k})\leq-\frac{2\tau L(1-\tau \sqrt{T} U)}{ \sqrt{T}(1+K)}[\mathcal{L}_2(\boldeta^{k})-\mathcal{L}_2(\boldeta^{*})].
\end{equation*}
Therefore, we obtain that
 \begin{equation*}
\mathcal{L}_2(\boldeta^{k+1})-\mathcal{L}_2(\boldeta^*)\leq\xi[\mathcal{L}_2(\boldeta^k)-\mathcal{L}_2(\boldeta^*)],
\end{equation*}
where
$\xi=1-\frac{2\tau L(1-\tau \sqrt{T}U)}{ \sqrt{T}(1+K)}\in(0,1)$.
\qed
\end{proof}
%%%%%%%%%%%%%%%%%%%%%%%%%%%%%%%%%%%%%%
\begin{lemma}\label{L9.3}
Assume
$0<L\leq\frac{\sigma_{(\min,2T)}}{n\sqrt{2T}}$ and $\frac{\|\bar{\bX}\|_2^2}{n}\leq U<\infty$.
Suppose that $\boldeta^{*}$ is an arbitrary sparse vector with $\|\boldeta^{*}\|_{0}=K\leq T$, $\|\boldeta^k\|_{0}=T$ and $\mathcal{L}_2(\boldeta^{k+1})-\mathcal{L}_2(\boldeta^{*})\leq \xi[\mathcal{L}_2(\boldeta^{k})-\mathcal{L}_2(\boldeta^{*})]$ for all $k\geq0$,  where $0<\xi<1$. Then,
\begin{equation}\label{eq001}
\begin{split}
\|\boldeta^{k}-\boldeta^{*}\|_{\infty}\leq&
\sqrt{(1+U/L)\cdot(K+T)}(\sqrt{\xi})^k\|\boldeta^0-\boldeta^{*}\|_{\infty}+\frac{2}{L}\|\nabla \mathcal{L}_2(\boldeta^{*})\|_{\infty},
\end{split}
\end{equation}
\end{lemma}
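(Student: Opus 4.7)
The plan is to bootstrap the given geometric decay of the loss gap into an $\ell_\infty$ decay of the iterates by using both directions of the norm-equivalence inequality \eqref{UL}. I would execute this in the following steps.

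First, I would use that $\mathcal{L}_2$ is quadratic to write the exact second-order expansion around $\boldeta^*$:
\begin{equation*}
\mathcal{L}_2(\boldeta^k)-\mathcal{L}_2(\boldeta^*)=\frac{1}{2n}\|\bar{\bX}(\boldeta^k-\boldeta^*)\|_2^2+\langle\nabla\mathcal{L}_2(\boldeta^*),\boldeta^k-\boldeta^*\rangle.
\end{equation*}
Because $\|\boldeta^k\|_0=T$ and $\|\boldeta^*\|_0=K\leq T$, the difference is $2T$-sparse, so Lemma \ref{L9.0} gives $\frac{1}{n}\|\bar{\bX}(\boldeta^k-\boldeta^*)\|_2^2\geq L\|\boldeta^k-\boldeta^*\|_1\|\boldeta^k-\boldeta^*\|_\infty$. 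Combining with H\"older's inequality for the linear term produces the key working inequality
\begin{equation*}
\tfrac{L}{2}\|\boldeta^k-\boldeta^*\|_1\|\boldeta^k-\boldeta^*\|_\infty\leq \mathcal{L}_2(\boldeta^k)-\mathcal{L}_2(\boldeta^*)+\|\nabla\mathcal{L}_2(\boldeta^*)\|_\infty\|\boldeta^k-\boldeta^*\|_1.
\end{equation*}

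Second, assuming $\boldeta^k\neq\boldeta^*$ (otherwise the lemma is vacuous), I would divide by $\|\boldeta^k-\boldeta^*\|_1$ and invoke $\|\boldeta^k-\boldeta^*\|_1\geq\|\boldeta^k-\boldeta^*\|_\infty$ on the remaining fraction. This yields a quadratic inequality in $\|\boldeta^k-\boldeta^*\|_\infty$ which, by the quadratic formula and $\sqrt{a+b}\leq\sqrt{a}+\sqrt{b}$, gives
\begin{equation*}
\|\boldeta^k-\boldeta^*\|_\infty\leq \sqrt{\tfrac{2[\mathcal{L}_2(\boldeta^k)-\mathcal{L}_2(\boldeta^*)]}{L}}+\tfrac{2}{L}\|\nabla\mathcal{L}_2(\boldeta^*)\|_\infty.
\end{equation*}
Iterating the contraction hypothesis gives $\mathcal{L}_2(\boldeta^k)-\mathcal{L}_2(\boldeta^*)\leq\xi^k[\mathcal{L}_2(\boldeta^0)-\mathcal{L}_2(\boldeta^*)]$, which pulls the $(\sqrt{\xi})^k$ factor out of the square-root term.

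Third, I would bound the initial loss gap in terms of $\|\boldeta^0-\boldeta^*\|_\infty$. Applying the same quadratic expansion at $k=0$, but now using the \emph{upper} inequality of Lemma \ref{L9.0} together with $\|\boldeta^0-\boldeta^*\|_1\leq(K+T)\|\boldeta^0-\boldeta^*\|_\infty$, I obtain
\begin{equation*}
2[\mathcal{L}_2(\boldeta^0)-\mathcal{L}_2(\boldeta^*)]\leq U(K+T)\|\boldeta^0-\boldeta^*\|_\infty^{2}+2(K+T)\|\nabla\mathcal{L}_2(\boldeta^*)\|_\infty\|\boldeta^0-\boldeta^*\|_\infty.
\end{equation*}
A Young's inequality $2ab\leq La^2+b^2/L$ applied to the cross term repackages this into the shape $L(1+U/L)(K+T)\|\boldeta^0-\boldeta^*\|_\infty^2+(\text{terms quadratic in }\|\nabla\mathcal{L}_2(\boldeta^*)\|_\infty/L)$. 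Taking a square root and using $\sqrt{a+b}\leq\sqrt{a}+\sqrt{b}$ once more splits this cleanly into the geometric coefficient $\sqrt{(1+U/L)(K+T)}\|\boldeta^0-\boldeta^*\|_\infty$ and an additive gradient piece that is dominated by the $\frac{2}{L}\|\nabla\mathcal{L}_2(\boldeta^*)\|_\infty$ already present.

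The main obstacle I foresee is the third step: controlling the cross term $\langle\nabla\mathcal{L}_2(\boldeta^*),\boldeta^0-\boldeta^*\rangle$ in the initial loss gap without corrupting either the pure geometric-in-$k$ term or the persistent $\frac{2}{L}\|\nabla\mathcal{L}_2(\boldeta^*)\|_\infty$ term. The Young's-type splitting above is the critical device; picking the conjugate weights correctly is what makes the coefficient of the decaying term come out to exactly $\sqrt{(1+U/L)(K+T)}$ and keeps the non-decaying contribution at $\mathcal{O}(\|\nabla\mathcal{L}_2(\boldeta^*)\|_\infty/L)$. Everything else is bookkeeping: geometric decay of the loss by the lemma's hypothesis, quadratic-formula manipulation, and a final assembly of the two pieces.
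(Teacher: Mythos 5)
Your overall route is the same as the paper's: the quadratic inequality in $\|\boldeta^k-\boldeta^*\|_\infty$ obtained from the lower inequality of Lemma~\ref{L9.0}, leading to
$\|\boldeta^{k}-\boldeta^{*}\|_{\infty}\leq \sqrt{2\max\{\mathcal{L}_2(\boldeta^{k})-\mathcal{L}_2(\boldeta^{*}),0\}/L}+\tfrac{2}{L}\|\nabla \mathcal{L}_2(\boldeta^{*})\|_{\infty}$, then the geometric decay of the loss gap, then the upper inequality of Lemma~\ref{L9.0} with $\|\boldeta^0-\boldeta^*\|_1\leq(K+T)\|\boldeta^0-\boldeta^*\|_\infty$ to control $\mathcal{L}_2(\boldeta^0)-\mathcal{L}_2(\boldeta^*)$. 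Steps one and two are fine (modulo the sign caveat you implicitly handle with the $\max\{\cdot,0\}$).

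The gap is exactly where you predicted it: your Young's-inequality treatment of the cross term does not close. With the only admissible weight ($2ab\leq La^2+b^2/L$, since any larger weight on $a^2$ inflates the geometric coefficient past $\sqrt{(1+U/L)(K+T)}$), you get
$2[\mathcal{L}_2(\boldeta^0)-\mathcal{L}_2(\boldeta^*)]\leq (U+L)(K+T)\|\boldeta^0-\boldeta^*\|_\infty^{2}+\tfrac{K+T}{L}\|\nabla\mathcal{L}_2(\boldeta^*)\|_\infty^2$,
and after multiplying by $\xi^k$, dividing by $L$, and splitting the square root, the second piece contributes an additive term $(\sqrt{\xi})^k\frac{\sqrt{K+T}}{L}\|\nabla\mathcal{L}_2(\boldeta^*)\|_\infty$. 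This is \emph{not} dominated by the $\frac{2}{L}\|\nabla\mathcal{L}_2(\boldeta^*)\|_\infty$ already present once $K+T>4$ and $k$ is small, so you only prove a weaker bound with constant $\frac{2+\sqrt{K+T}}{L}$ on the gradient term. The paper avoids Young's inequality entirely: it uses the dichotomy set up at the start of the argument (if $\|\boldeta-\boldeta^*\|_\infty<\frac{2}{L}\|\nabla\mathcal{L}_2(\boldeta^*)\|_\infty$ the bound is immediate, so one may assume $\|\nabla\mathcal{L}_2(\boldeta^*)\|_\infty\leq\frac{L}{2}\|\boldeta^0-\boldeta^*\|_\infty$) to absorb the entire cross term into the quadratic one, giving $\|\nabla\mathcal{L}_2(\boldeta^*)\|_{\infty}+\frac{U}{2}\|\boldeta^0-\boldeta^*\|_{\infty}\leq\frac{L+U}{2}\|\boldeta^0-\boldeta^*\|_{\infty}$ with no leftover gradient piece. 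Replacing your Young step with this absorption recovers the stated constant; as written, your step three falls short of \eqref{eq001}.
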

\begin{proof}
If $\|\boldeta^k-\boldeta^*\|_{\infty}< \frac{2\|\nabla \mathcal{L}_2(\boldeta^*)\|_{\infty}}{L}$, then \eqref{eq001} holds, so we only concentrate on the case that $\|\boldeta^k-\boldeta^*\|_{\infty}\geq \frac{2\|\nabla
\mathcal{L}_2(\boldeta^*)\|_{\infty}}{L}$.
On one hand,
by the left hand side of \eqref{UL},
we have
{
\begin{equation*}
\begin{split}
&\mathcal{L}_2(\boldeta^{k})-\mathcal{L}_2(\boldeta^{*})\\
&\geq\langle\nabla \mathcal{L}_2(\boldeta^*),\boldeta^{k}-\boldeta^{*}\rangle+\frac{L}{2}\left\|\boldeta^{k}-\boldeta^{*}\right\|_1\left\|\boldeta^{k}-\boldeta^{*}\right\|_{\infty}\\
&\geq-\|\nabla \mathcal{L}_2(\boldeta^{*})\|_{\infty}\|\boldeta^{k}-\boldeta^{*}\|_1+\frac{L}{2}\left\|\boldeta^{k}-\boldeta^{*}\right\|_1\left\|\boldeta^{k}-\boldeta^{*}\right\|_{\infty}.
\end{split}
\end{equation*}
}
Furthermore,
{
\begin{equation*}
(\|\boldeta^k-\boldeta^*\|_1-\|\boldeta^k-\boldeta^*\|_{\infty})\left(\frac{L}{2}\big{\|}\boldeta^k-\boldeta^*\big{\|}_{\infty}-\big{\|}\nabla \mathcal{L}_2(\boldeta^*)\big{\|}_{\infty}\right)\geq 0.
\end{equation*}
}
 Then, we can get
{
\begin{equation*}
\frac{L}{2}\big{\|}\boldeta^k-\boldeta^*\big{\|}_{\infty}^2-\|\nabla \mathcal{L}_2(\boldeta^*)\|_{\infty}\|\boldeta^k-\boldeta^*\|_{\infty}-[\mathcal{L}_2(\boldeta^k)-\mathcal{L}_2(\boldeta^*)]\leq0,
\end{equation*}
}
which is one univariate quadratic inequality about $\|\boldeta^{k}-\boldeta^{*}\|$.
Therefore, we have
%{\footnotesize
\begin{equation*}
\begin{split}
&\|\boldeta^{k}-\boldeta^{*}\|_{\infty}\leq\frac{\|\nabla \mathcal{L}_2(\boldeta^{*})\|_{\infty}+\sqrt{\left\|\nabla \mathcal{L}_2(\boldeta^{*})\right\|_{\infty}^2+2L[\mathcal{L}_2(\boldeta^{k})-\mathcal{L}_2(\boldeta^{*})]}}{L}.
\end{split}
\end{equation*}
%}
Thus, we can get
\begin{equation}\label{eq002}
\begin{split}
&\|\boldeta^{k}-\boldeta^{*}\|_{\infty}\leq \sqrt{\frac{2\max\{\mathcal{L}_2(\boldeta^{k})-\mathcal{L}_2(\boldeta^{*}),0\}}{L}}+\frac{2\|\nabla \mathcal{L}_2(\boldeta^{*})\|_{\infty}}{L}.
\end{split}
\end{equation}
On the other hand, based on the right hand side of \eqref{UL},
we have
{
\begin{equation*}
\begin{split}
&\mathcal{L}_2(\boldeta^0)-\mathcal{L}_2(\boldeta^*)\\
&\leq\langle\nabla \mathcal{L}_2(\boldeta^*),\boldeta^0-\boldeta^*\rangle+\frac{U}{2}\left\|\boldeta^0-\boldeta^*\right\|_2^2\\
&\leq\langle\nabla \mathcal{L}_2(\boldeta^*),\boldeta^0-\boldeta^*\rangle+\frac{U}{2}\left\|\boldeta^0-\boldeta^*\right\|_1\left\|\boldeta^0-\boldeta^*\right\|_{\infty}\\
&\leq\|\nabla \mathcal{L}_2(\boldeta^*)\|_{\infty}\|\boldeta^0-\boldeta^*\|_1+\frac{U}{2}\left\|\boldeta^0-\boldeta^*\right\|_1\left\|\boldeta^0-\boldeta^*\right\|_{\infty}\\
&\leq (K+T)\|\boldeta^0-\boldeta^*\|_{\infty}(\|\nabla \mathcal{L}_2(\boldeta^*)\|_{\infty}+\frac{U}{2}\|\boldeta^0-\boldeta^*\|_{\infty}).
\end{split}
\end{equation*}
}
Furthermore,
{
\begin{equation*}
\begin{split}
&\mathcal{L}_2(\boldeta^{k})-\mathcal{L}_2(\boldeta^{*})\\
&\leq \xi[\mathcal{L}_2(\boldeta^{k-1})-\mathcal{L}_2(\boldeta^{*})]\\
&\leq \xi^{k}[\mathcal{L}_2(\boldeta^{0})-\mathcal{L}_2(\boldeta^{*})]\\
&\leq\xi^k(K+T)\|\boldeta^0-\boldeta^*\|_{\infty}(\|\nabla \mathcal{L}_2(\boldeta^*)\|_{\infty}+\frac{U}{2}\|\boldeta^0-\boldeta^*\|_{\infty})\\
&\leq\frac{\xi^k(K+T)(L+U)}{2}\|\boldeta^0-\boldeta^*\|_{\infty}^2.\\
\end{split}
\end{equation*}
}
Hence, by \eqref{eq002}, we have
\begin{equation*}
\begin{split}
\|\boldeta^{k}-\boldeta^{*}\|_{\infty}\leq&
\sqrt{(K+T)(1+U/L)}(\sqrt{\xi})^k\|\boldeta^0-\boldeta^*\|_{\infty}+\frac{2}{L}\|\nabla \mathcal{L}_2(\boldeta^*)\|_{\infty}.
\end{split}
\end{equation*}
This completes the proof.
\qed
\end{proof}
%%%%%%%%%%%%%%%%%%%%%%%%%%%%%%%%%%%%%%%%
\begin{lemma}\label{Le}(Lemma 1 of   %Huang and Ma  (\citeyear{huang2010variable})).
\cite{huang2010variable}).
Suppose that conditions %(C3)-(C6)
(C\ref{cond3})-(C\ref{cond7})
 hold, then
\begin{equation*}
\begin{split}
&E\left(\|\nabla \mathcal{L}_1(\bbeta^*)\|_\infty\right)\\
&\leq C_1\sqrt{\frac{\log(p)}{n}}\left(\sqrt{\frac{2C_2\log(p)}{n}}+\frac{4\log(2p)}{n}+C_2\right)^{\frac{1}{2}},
\end{split}
\end{equation*}
where $C_1$ and $C_2$ are two finite positive constants. In particular, when $n\gg\log(p)$,
$$E\left(\|\nabla \mathcal{L}_1(\bbeta^*)\|_\infty\right)=o(1).$$
\end{lemma}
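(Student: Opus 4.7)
The approach is to bound each coordinate $|[\nabla \mathcal{L}_1(\bbeta^*)]_j|$ separately via a moment estimate, then aggregate across $j=1,\ldots,p$ using a sub-Gaussian maximal inequality. Define
\[
S_{n,j} \;=\; \frac{1}{n}\sum_{i=1}^{n} w_{(i)}\,x_{(i),j}\bigl(\bx_{(i)}^{T}\bbeta^{*}-Y_{(i)}\bigr),
\]
so that $[\nabla \mathcal{L}_1(\bbeta^{*})]_j = S_{n,j}$ up to sign. The population counterpart of $S_{n,j}$ vanishes at $\bbeta^{*}$ by Stute's identity for Kaplan-Meier weighted least squares, so the task reduces to controlling the stochastic fluctuation of $S_{n,j}$ around this vanishing mean.

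First I would derive a sub-Gaussian variance proxy for $S_{n,j}$. Condition (C\ref{cond4}) gives $|x_{(i),j}|\le B$. Conditioning on the weight vector and invoking (C\ref{cond5}) to decouple the errors from the weights, together with the sub-Gaussian tails of $\epsilon_i$ from (C\ref{cond3}), the conditional variance of $S_{n,j}$ given the weights is at most $(B^{2}/n^{2})\sum_{i}w_{(i)}^{2}\cdot\sigma_{\mathrm{eff}}^{2}$, where $\sigma_{\mathrm{eff}}^{2}$ absorbs $\sigma^{2}$ together with a contribution from the censoring distribution. Taking expectation over the weights and using the explicit Kaplan-Meier formula \eqref{wq} to bound $E[\sum_{i}w_{(i)}^{2}]$ yields a variance proxy matching the bracketed expression $\sqrt{2C_{2}\log(p)/n}+4\log(2p)/n+C_{2}$ in the statement.

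Next I would combine this with the standard sub-Gaussian maximal inequality
\[
E\max_{1\le j\le p}|S_{n,j}| \;\le\; \sqrt{2\log(2p)}\cdot\max_{j}\|S_{n,j}\|_{\psi_{2}},
\]
which contributes the factor $C_{1}\sqrt{\log(p)/n}$ and produces the stated product form. The $o(1)$ conclusion when $n\gg\log p$ is then immediate, since the prefactor tends to zero while the bracketed factor stays bounded.

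\textbf{Main obstacle.} The hard part is handling the Kaplan-Meier weights $w_{(i)}$, which are jointly dependent across $i$ through the entire ordered failure-and-censoring sample; the summands of $S_{n,j}$ are therefore \emph{not} independent in $i$, so ordinary Hoeffding or Bernstein arguments do not apply directly. Condition (C\ref{cond5}) decouples errors from weights by conditioning, but sharp control of the second-moment structure of $\sum_{i}w_{(i)}^{2}$ and its cross-terms still requires the telescoping computation built into the product representation in \eqref{wq}. This delicate moment bookkeeping is the technical content encapsulated in Lemma 1 of Huang et al. (2010), and it is where a self-contained derivation would concentrate most of its effort.
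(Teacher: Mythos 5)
The paper does not actually prove this lemma: it imports it verbatim as Lemma 1 of \cite{huang2010variable}, so there is no in-paper argument to compare yours against step by step. Your sketch follows the same general strategy as that source --- write $[\nabla\mathcal{L}_1(\bbeta^*)]_j$ as a Kaplan--Meier-weighted sum, use (C\ref{cond3})--(C\ref{cond5}) to control each coordinate, and finish with a maximal inequality over $j=1,\ldots,p$ --- so the plan has the right shape and the $o(1)$ conclusion for $n\gg\log p$ follows correctly from the stated bound.

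As written, however, this is a proof plan rather than a proof, and the gap sits exactly where you acknowledge it does. Two steps are asserted rather than established. First, the claim that the variance proxy obtained by integrating $\sum_i w_{(i)}^2$ over the weight distribution ``matches'' the bracketed expression $\sqrt{2C_2\log(p)/n}+4\log(2p)/n+C_2$ is the entire technical content of the lemma; nothing in your argument produces those three terms, and the presence of the $4\log(2p)/n$ term inside the bracket indicates a Bernstein-type (variance plus boundedness) maximal inequality rather than the pure $\psi_2$ bound $E\max_j|S_{n,j}|\le\sqrt{2\log(2p)}\max_j\|S_{n,j}\|_{\psi_2}$ that you invoke, so the claimed factorization would need to be verified rather than assumed. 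Second, condition (C\ref{cond5}) decouples the errors from the weights, but the summands pair $x_{(i),j}$ with $\epsilon_{(i)}$ through the \emph{ordering} of the $Y_i$'s, and that ordering is itself a function of the errors; treating the $\epsilon_{(i)}$ as an i.i.d.\ sub-Gaussian sequence after conditioning on the weights therefore needs justification beyond (C\ref{cond5}). Since you explicitly defer this bookkeeping to Lemma 1 of \cite{huang2010variable} --- the very statement being proved --- the proposal is circular as a self-contained derivation, although it is no less complete than the paper's own treatment, which likewise only cites the result.
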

%%%%%%%%%%%%%%%%%%%%%%%%%%%%%%
%\subsection*{Proof of Theorem \ref{th1}}
\section{Proof of Theorem \ref{th1}}
\setcounter{equation}{0}
\def\theequation{B.\arabic{equation}}
\label{th1proof}
\begin{proof}
By Lemma $\ref{L9.2}$, we have \begin{equation*}
\mathcal{L}_2(\boldeta^{k+1})-\mathcal{L}_2(\boldeta^*)\leq\xi[\mathcal{L}_2(\boldeta^k)-\mathcal{L}_2(\boldeta^*)],
\end{equation*}
where
$\xi=1-\frac{2\tau L(1-\tau \sqrt{T}U)}{ \sqrt{T}(1+K)}\in(0,1)$.
Therefore, the conditions of Lemma $\ref{L9.3}$ are satisfied. Taking $\boldeta^0=D^{-1}\bbeta^0=0$, then we can get
\begin{equation}\label{error1}
\begin{split}
\|\boldeta^k-\boldeta^*\|_{\infty}\leq& \sqrt{(K+T)(1+U/L)}(\sqrt{\xi})^k\|\boldeta^{*}\|_{\infty}+\frac{2}{L}\|\nabla \mathcal{L}_2(\boldeta^{*})\|_{\infty}.
\end{split}
\end{equation}
By condition %(C2)
(C\ref{cond2})
and \eqref{error1}, we have
{
\begin{equation*}
\begin{split}
&\|\bbeta^k-\bbeta^*\|_\infty\\
&=\|D(\boldeta^k-\boldeta^*)\|_\infty\\
&\leq b\|\boldeta^k-\boldeta^*\|_\infty\\
&\leq b^2\sqrt{(K+T)(1+U/L)}(\sqrt{\xi})^k\|\bbeta^{*}\|_\infty+\frac{2b^2}{L}\|\nabla \mathcal{L}_1(\bbeta^{*})\|_\infty.
\end{split}
\end{equation*}
}
This completes the proof. \qed
\end{proof}
\section{Proof of Theorems \ref{th1b}}
\setcounter{equation}{0}
\def\theequation{C.\arabic{equation}}
\begin{proof}
By Lemma \ref{Le} and Markov inequality, we have
\begin{equation*}
P\left(\|\nabla \mathcal{L}_1(\bbeta^*)\|_\infty\geq\varepsilon_1\right)\leq\left(\frac{\log(p)}{n}\right)^{\frac{1}{4}},
\end{equation*}
where \begin{equation*}
\varepsilon_1= C_1\left(\frac{\log(p)}{n}\right)^{\frac{1}{4}}\left(\sqrt{\frac{2C_2\log(p)}{n}}+\frac{4\log(2p)}{n}+C_2\right)^{\frac{1}{2}}.
\end{equation*}
Then,
%for any $\alpha \in (0,\infty)$,
with probability at least
1-$\left(\frac{\log(p)}{n}\right)^{\frac{1}{4}}$,
\begin{equation}\label{errorlinf}
\begin{split}
\|\bbeta^k-\bbeta^*\|_{\infty}\leq& b^2\sqrt{(K+T)(1+U/L)}(\sqrt{\xi})^k\|\bbeta^{*}\|_{\infty}+\frac{2b^2}{L}\varepsilon_1
\end{split}
\end{equation}
This completes the proof.
\qed
\end{proof}
%%%%%%%%%%%%%%%%%%%%%%
%%%%%%%%%%%%%%%%%%%%%%%%%%%%%%
\section{Proof of Theorem  \ref{th2}}
\label{th2proof}
\begin{proof}
By \eqref{errorlinf} and condition %(C7)
(C\ref{cond7}),
 some algebra show that
\begin{equation*}
\begin{split}
\|\bbeta^{k}-\bbeta^{*}\|_{\infty}\leq&
b^2\sqrt{(K+T)(1+U/L)}(\sqrt{\xi})^k\|\bbeta^*\|_{\infty}+\frac{2}{3}\|\bbeta^*_{A^*}\|_{\min}< \|\bbeta^*_{A^*}\|_{\min},
\end{split}
\end{equation*}
if $k>\log_{\frac{1}{\xi}} 9 (T+K)(1+U/L)r^2b^4.$
This implies  that $A^* \subseteq A^k$.
\qed
\end{proof}
%\newpage
\end{appendix}

% BibTeX users please use one of
\bibliographystyle{asa}    % basic style, author-year citations
\bibliography{ref_bib}   % name your BibTeX data base

% details of the development of algorithm
% \include{./tex/RSAVS_algorithm_details}

% technical details
% \include{./tex/RSAVS_technical_details}

% re-derive the algorithm with more parameters
% \include{./tex/RSAVS_algorithm_details_more_parameters}

\end{document}